\documentclass[final,hidelinks,onefignum,onetabnum]{siamart251216}

\usepackage{multirow}
\usepackage{booktabs}
\usepackage{graphicx,epstopdf} 
\usepackage[caption=false]{subfig} 

\usepackage{lipsum}
\usepackage{amsfonts}
\usepackage{graphicx}
\usepackage{epstopdf}
\usepackage{algorithmic}
\ifpdf
  \DeclareGraphicsExtensions{.eps,.pdf,.png,.jpg}
\else
  \DeclareGraphicsExtensions{.eps}
\fi

\newsiamremark{remark}{Remark}
\newsiamremark{hypothesis}{Hypothesis}
\crefname{hypothesis}{Hypothesis}{Hypotheses}
\newsiamthm{claim}{Claim}
\newsiamremark{fact}{Fact}
\crefname{fact}{Fact}{Facts}

\headers{Hybrid Fixed-Stress Transferable Neural Network}{Z. Shen, L. Zhu, and M. Cai}

\title{HFS-TransNet: A Hybrid Fixed-Stress Transferable Neural Network for Quasi-Static Biot Poroelasticity\thanks{Submitted to the editors DATE.
\funding{This work was supported by the National Natural Science Foundation of China (grants
11871091 and 12571580).}}}

\author{Zhequan Shen\thanks{School of Mathematical Sciences, Beihang University, Beijing 102206, China 
  (\email{zhequanshen@buaa.edu.cn}).}
\and Liyong Zhu\thanks{School of Mathematical Sciences, Beihang University, Beijing 102206, China
  (\email{liyongzhu@buaa.edu.cn}).}
\and {Mingchao Cai\thanks{Department of Mathematics, Morgan State University, Baltimore, MD 21251, USA
  (\email{Mingchao.Cai@morgan.edu}).}}}

\usepackage{amsopn}

\ifpdf
\hypersetup{
  pdftitle={HFS-TransNet: A Hybrid Fixed-Stress Transferable Neural Network for Quasi-Static Biot Poroelasticity},
  pdfauthor={Z. Shen, L. Zhu, and M. Cai}
}
\fi

\begin{document}

\maketitle

\begin{abstract}
The quasi-static Biot system, which governs coupled fluid-solid interactions in poroelastic media, poses significant computational challenges due to its strong coupling and multiscale nature.
To address these, we propose a hybrid fixed-stress transferable neural network (HFS-TransNet) method.
In HFS-TransNet, a data-driven TransNet module first approximates observed data on an initialization domain to supply a high-quality starting point for the subsequent physics-informed stage, which then drives fixed-stress splitting iterations via TransNet and admits an error bound.
After each iteration, a greedy update strategy on a validation domain prevents model degradation and a relative tolerance criterion governs convergence termination.
HFS-TransNet thereby inherits the robustness of the fixed-stress splitting scheme, the flexibility of data-driven modeling, and the transferability of TransNet, effectively overcoming the strong coupling and locking instability inherent in the Biot system.
Ablation studies and comparisons with FS-FEM and FS-PINN confirm HFS-TransNet's superior performance in capturing multiscale coupled physics across varying physical parameters and boundary conditions.
By bridging classical decoupled iterative schemes with hybrid scientific machine learning, HFS-TransNet offers a novel perspective for complex poroelastic simulations, as demonstrated by its successful application to brain edema simulations.
\end{abstract}

\begin{keywords}
Biot model, hybrid SciML, fixed-stress, robustness, transferability
\end{keywords}

\begin{MSCcodes}
65M15, 68T07, 74F10
\end{MSCcodes}

\section{Introduction}
\label{sec: intro}
Biot consolidation model \cite{biot1941general, showalter2000diffusion, vzenivsek1984existence} in poroelasticity describes the coupled interaction between fluid flow and the porous solid skeleton, governed by a system of partial differential equations linking fluid pressure and solid displacement. 
This model has broad applications in geotechnical \cite{huang2022generalized}, biomechanical \cite{chen2023physical, ju2020parameter, wang2025mixed}, and reservoir \cite{kim20111591} engineering.
Due to its multiscale nature, strong coupling, and sensitivity to physical parameters, classic two-field finite element methods (FEM) often suffer from elastic locking and pressure oscillations \cite{lee2017parameter, phillips2007coupling, yi2017study}.
Therefore, three- or four-field reformulation using inf-sup stable finite element pairs \cite{chu2026block, feng2018analysis, mora2026118447, oyarzua2016locking} have been proposed.
Despite this, these methods may incur substantial computational cost.
To overcome these challenges, the fixed-stress splitting scheme \cite{kim20111591, kim2011stability, mikelic2013convergence, storvik2019optimization} decomposes the Biot system into two iteratively solved subproblems, with its convergence rigorously established.

Unlike traditional grid-based methods, neural network methods have witnessed rapid development owing to their mesh-free nature.
Purely data-driven neural network methods have been successfully employed to discover unknown models from observed data \cite{chen2025due}.
When applied to solving well-defined physical models, although these methods excel in regions rich with observation, they often struggle to capture the underlying physics, especially when required to extrapolate beyond the observed data range \cite{liverani2025hyco}.
To address this issue, physics-informed neural networks (PINN) \cite{raissi2019physics} have been developed.
To enhance training stability in poroelasticity simulations, the fixed-stress method has been integrated into PINN, either by embedding it directly into the training process \cite{haghighat2022physics} or by coupling two independent networks for displacement and pressure through an iterative algorithm (FS-PINN) \cite{cai2023combination}.
Although these physics-informed methods enable extrapolation beyond the observed domain, they tend to suffer from instability, largely due to equation stiffness and data scarcity \cite{liverani2025hyco}.

To balance the trade-off between the flexibility of purely data-driven models and the interpretability of physics-informed models, the hybrid-cooperative (HYCO) framework proposed in \cite{liverani2025hyco} iteratively integrates the two approaches through a mutual regularization mechanism, enabling accurate recovery of solutions under ill-posed conditions.
Furthermore, a novel scientific machine learning (SciML) paradigm proposed in \cite{Quarteroni2025com} leverages the complementary strengths of data-driven methods in discovering complex nonlinear patterns and physics-informed models in providing mathematical interpretability, achieving significant success in the simulation of human cardiac function.
However, the above-mentioned neural network methods usually rely on deep architectures and empirical hyperparameters, leading to high optimization costs and substantial manual effort in hyperparameter selection.

To reduce the computational cost associated with deep architectures, shallow neural network methods with pretrained hidden-layer parameters are introduced in \cite{cai2020deep, chen2022bridging, dong2021local, dong2022computing, li2025local, shang2025overlapping, weng2026deep}.
Furthermore, to alleviate the burden of hyperparameter tuning caused by low interpretability, the transferable neural network (TransNet) has been proposed in \cite{zhang2024transferable}.
TransNet is a two-layer neural network with pretrained hidden-layer neurons possessing explicit geometric meanings.
The numerical results in \cite{cheng2026generalized, lu2025multiple, shen2026matched, xu2026weak, yan2026time, zhang2024transferable} verify the excellent transferability of TransNet. 
However, when applied to the Biot model, it still struggles to capture coupled physical effects, especially in nearly incompressible regimes as the Poisson's ratio $\nu \rightarrow 0.5$.

Inspired by FS-PINN, TransNet and HYCO, we propose a hybrid fixed-stress transferable neural network (HFS-TransNet) for solving Biot model.
By leveraging the interaction between a purely data-driven TransNet module and a physics-informed TransNet module embedded with fixed-stress splitting, this method effectively overcomes the strong coupling and parameter instabilities of the Biot system.
The main contributions of our work are summarized as follows.
\begin{itemize}
    \item We propose a novel framework that bridges traditional decoupled iterative algorithms with hybrid scientific machine learning approaches, offering a new perspective for addressing complex engineering problems.
    \item We integrate the classical fixed-stress splitting iteration into TransNet, providing error estimates to ensure the physics-informed module inherits the stability of the iterative scheme while retaining the accuracy of the network.
    \item We design a TransNet-based data-driven module that provides high-quality initialization for the physics-informed stage and adaptively updates the iteration weights, thereby effectively reducing the computational cost.
    \item We conduct extensive ablation studies to confirm the effectiveness of the HFS-TransNet, and comprehensive comparative experiments to demonstrate its superiority in accuracy, efficiency, and robustness, supporting its applicability in Barry--Mercer problem, Mandel problem and brain edema simulations.
\end{itemize}

The rest of the paper is organized as follows. 
\Cref{sec: biot} introduces the Biot model. 
In \Cref{sec: met}, the HFS-TransNet is developed with an error estimate. 
\Cref{sec: ne} presents numerical results on a set of benchmark experiments and brain edema simulations.
Finally, \Cref{sec: ne_con} concludes this work and outlines future research directions.

\section{Biot model}
\label{sec: biot}
In this paper, we consider the following quasi-static Biot system in a bounded domain $\Omega \subset \mathbb{R}^2$ over the time interval $(0, T]$:
\begin{align}
    \partial_t \left( \frac{1}{M} p + \alpha \nabla \cdot \boldsymbol{u} \right) - \nabla \cdot K \nabla p = f \quad \text{in } \Omega \times (0, T], \label{equ: fluid mass conservation} \\
    -\nabla \cdot \sigma(\boldsymbol{u}) + \alpha \nabla p = \boldsymbol{g} \quad \text{in } \Omega \times (0, T]. \label{equ: solid force equilibrium}
\end{align}
\Cref{equ: fluid mass conservation} expresses fluid mass conservation, while \Cref{equ: solid force equilibrium} expresses solid force equilibrium.
In the system, $\boldsymbol{u}=(u, v)^\top$ and $p$ are the target functions, which denote the displacement vector of the solid phase and pressure of the fluid phase respectively.
Meanwhile, $f$ is a source or sink term, $\boldsymbol{g}$ is the body force, $M > 0$ is the Biot modulus, $\alpha$ is the Biot-Willis constant, $K = \kappa/\mu_f$ is the hydraulic conductivity with the permeability $\kappa > 0$ and the fluid viscosity $\mu_f$.
\begin{equation}
    \sigma(\boldsymbol{u}) := 2\mu\varepsilon(\boldsymbol{u}) + \lambda (\nabla\cdot\boldsymbol{u}) \boldsymbol{I}, \quad
    \varepsilon(\boldsymbol{u}) := \frac{1}{2}\left(\nabla\boldsymbol{u} + \nabla\boldsymbol{u}^T\right), \nonumber
\end{equation}
where $\lambda$ and $\mu$ are Lamé constants computed from the Young's modulus $E$ and the Poisson's ratio $\nu$:
\begin{equation}
    \lambda = \frac{E\nu}{(1+\nu)(1-2\nu)}, \quad \mu = \frac{E}{2(1+\nu)}. \nonumber
\end{equation}

To define a well-posed mathematical problem, the above system needs appropriate boundary and initial conditions. Without loss of generality, we consider mixed Neumann and Dirichlet boundary conditions following \cite{ju2020parameter}:
\begin{equation}\label{equ: boundary condition}
    \left\{
    \begin{aligned}
        \boldsymbol{u} &= \boldsymbol{u}_d & \quad \text{on } \Gamma_{\boldsymbol{u}}^D \times (0,T], \\
        \sigma(\boldsymbol{u})\boldsymbol{n} - \alpha p\boldsymbol{n} &= \boldsymbol{h} & \quad \text{on } \Gamma_{\boldsymbol{u}}^N \times (0,T], \\
        p &= p_d & \quad \text{on } \Gamma_p^D \times (0,T], \\
        K\nabla p \cdot \boldsymbol{n} &= q & \quad \text{on } \Gamma_p^N \times (0,T],
    \end{aligned}
    \right.
\end{equation}
where $\Gamma_{\boldsymbol{u}}^D, \Gamma_{\boldsymbol{u}}^N$ and $\Gamma_p^D, \Gamma_p^N$ are the Dirichlet and Neumann boundaries for $\boldsymbol{u}$ and $p$ respectively, $\partial\Omega = \Gamma_{\boldsymbol{u}}^D \cup \Gamma_{\boldsymbol{u}}^N = \Gamma_p^D \cup \Gamma_p^N$. 
Additionally, the initial conditions are
\begin{equation}\label{equ: initial condition}
    \boldsymbol{u}(0) = \boldsymbol{u}_0, \quad p(0) = p_0, \quad \text{in}\, \Omega \times \{0\}.
\end{equation}

\section{Methodology}
\label{sec: met}
In this section, we briefly review the transferable neural network (TransNet) and identify its bottlenecks in solving Biot systems, subsequently presenting our HFS-TransNet framework along with its numerical analysis.

\subsection{Network architecture}
\label{sec: pre_TransNet}
The TransNet \cite{zhang2024transferable} is a fully connected two-layer neural network (one hidden layer and one output layer) with pretrained hidden-layer neurons for solving PDEs of the following general formulation:
\begin{equation}\label{equ: TransNet_PDE}
    \left\{
    \begin{aligned}
        & \mathcal{L}(\boldsymbol{u}) = f, \quad \boldsymbol{x} \in \Omega\times(0,T],\\
        & \mathcal{B}(\boldsymbol{u}) = g, \quad \boldsymbol{x} \in \partial\Omega\times(0,T],\\
        & \mathcal{I}(\boldsymbol{u}) = h, \quad \boldsymbol{x} \in \Omega\times\{0\},\\
    \end{aligned}
    \right.
\end{equation}
where $\Omega$ denotes the spatial domain in $\mathbb{R}^d$. 
As shown in \cref{fig: TransNet}, the computational procedure of TransNet consists of two stages: generating pretrained hidden-layer parameters, and optimizing the parameters of the output layer.
\begin{figure}[tbhp]
    \centering
	\includegraphics[width=0.95\textwidth]{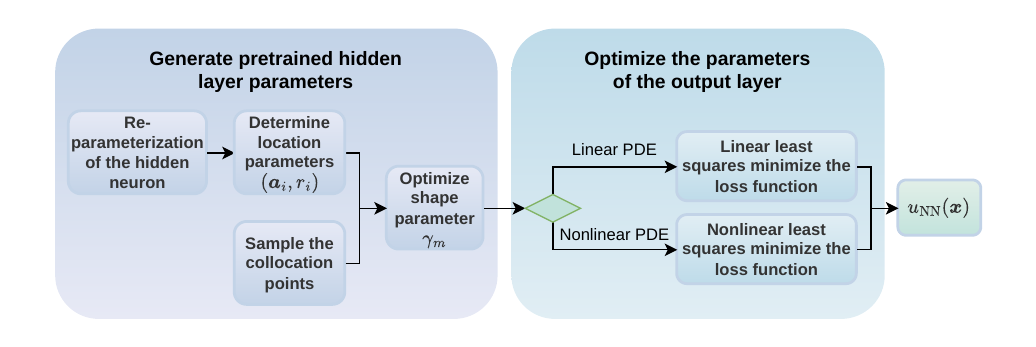}
	\vspace{-0.5cm}
	\caption{The flowchart of the transferable neural network method for solving PDEs.}
	\label{fig: TransNet}
\end{figure}

In the first stage, the pretrained model is the neural feature space $\mathcal{P}_{\mathrm{NN}}$, expanded by hidden-layer neurons (or called neural basis functions)  $\left\{\sigma\left(\boldsymbol{w}_i^{\top} \boldsymbol{x}+b_i\right)\right\}_{i=1}^I$:
\begin{equation}\label{equ: TransNet_NFS}
	\mathcal{P}_{\mathrm{NN}}=\operatorname{span}\left\{\sigma\left(\boldsymbol{w}_1^{\top} \boldsymbol{x}+b_1\right), \ldots, \sigma\left(\boldsymbol{w}_I^{\top} \boldsymbol{x}+b_I\right)\right\},
\end{equation} 
where $I$ is the number of hidden-layer neurons, $\sigma(\cdot)$ is the activation function, and $\boldsymbol{w}_i\in \mathbb{R}^{d+1}, b_i\in \mathbb{R}$ is the weight and bias of the $i$-th hidden-layer neuron. 
To achieve a uniform distribution of neurons in $\mathcal{P}_{\mathrm{NN}}$ within the ball $B_R(\boldsymbol{x}_c)$ (centered at the point $\boldsymbol{x}_c \in \mathbb{R}^{d+1}$ with a radius $R \in \mathbb{R}_+$), the initial step involves the re-parameterization of the hidden-layer neuron as:
\begin{equation}
	\sigma\left(\boldsymbol{w}_i^{\top} \boldsymbol{x}+b_i\right)
	= \sigma\left(\gamma_i\left(\boldsymbol{a}_i^{\top} (\boldsymbol{x}-\boldsymbol{x}_c)+r_i\right)\right), \nonumber
\end{equation}
where $\boldsymbol{a}_i \in \mathbb{R}^{d+1}, \|\boldsymbol{a}_i\|_2=1$ is the unit normal vector of the partition hyperplane corresponding to the $i$-th hidden-layer neuron, i.e., $\boldsymbol{w}_i^{\top} \boldsymbol{x}+b_i=0$.
Here, $r_i \in \mathbb{R}$, $|r_i|$ represents the distance from the point $\boldsymbol{x}_c$ to the partition hyperplane, $(\boldsymbol{a}_i, r_i)$ are named location parameters, while $\gamma_i \in \mathbb{R}_+$ is the shape parameter. 
In this work, we take $\sigma(\cdot)=\tanh(\cdot)$, which is widely used for solving PDEs. 
The shape parameter $\gamma_i$ controls the transition layer width of the activation function near the partition hyperplane. 
{\em The location parameters $\left\{\boldsymbol{a}_i\right\}_{i=1}^I$ are i.i.d., uniformly distributed on the $(d+1)$-dimensional unit sphere and $\left\{r_i\right\}_{i=1}^I$ are i.i.d., uniformly distributed in $[0,R]$.}
For simplicity, it is usually assumed that all hidden-layer neurons share the same shape parameter, i.e., $\gamma_i=\gamma$. The next step is optimizing the shape parameter $\gamma$, which can be achieved by some simple line search algorithms (see \cite{lu2025multiple} for details). 
At this stage, the construction of $\mathcal{P}_{\mathrm{NN}}$ is completed. 
To simplify notation, we assume that solution $\boldsymbol{u}$ in \cref{equ: TransNet_PDE} is a scalar function, then the network solution $u_\mathrm{NN}\in \mathcal{P}_\mathrm{NN}$ is expressed as
\begin{equation}\label{equ: u_NN}
	u_{\mathrm{NN}}(\boldsymbol{x})=\sum_{i=1}^I \alpha_i \sigma\left(\boldsymbol{w}_i^{\top} \boldsymbol{x}+b_i\right),
\end{equation}
where $\alpha_1, \alpha_2, \dots, \alpha_I$ are the weights of the output layer.

In the second stage, the parameters of the hidden layer are frozen, and the weights of the output layer, that is, $\alpha_1, \dots, \alpha_I$ in \cref{equ: u_NN} can be obtained through least-squares solvers, which are obtained by minimizing the PDE-residual loss function: 
\begin{equation}\label{equ: TransNet_loss}
	{\mathrm{Loss}} = \lambda_1\|\mathcal{L}(u(\boldsymbol{x})) - \mathcal{L}(u_{\mathrm{NN}}(\boldsymbol{x}))\|_2^2 
    + \lambda_2\|\mathcal{B}(u(\boldsymbol{x})) - \mathcal{B}(u_{\mathrm{NN}}(\boldsymbol{x})) \|_2^2
    + \lambda_3\|\mathcal{I}(u(\boldsymbol{x})) - \mathcal{I}(u_{\mathrm{NN}}(\boldsymbol{x})) \|_2^2,
\end{equation}
where $\lambda_1$, $\lambda_2$, and $\lambda_3$ are adaptively calculated by a normalization approach proposed in \cite{lu2025multiple}.
\cref{alg: TransNet} summarizes the main steps of the TransNet method. 

\begin{algorithm}
	\caption{TransNet method for solving PDE problems \cite{zhang2024transferable}}
	\label{alg: TransNet}
    \begin{algorithmic}[1]
	\STATE{\textbf{Input:} The spatial-temporal domain $\Omega\times(0,T]$, the point $\boldsymbol{x}_c$, the radius $R$, the shape parameter $\gamma$, the number of hidden-layer neurons $I$.}
    
	\STATE{\textbf{Output:} The TransNet solution $u_{\mathrm{NN}}$.}
	
    \STATE{Generate the location parameters $\left\{(\boldsymbol{a}_i, r_i)\right\}_{i=1}^{I}$ in $\Omega\times(0,T]$ as the process described in \cite{lu2025multiple} and form the set of hidden-layer neurons $\{\sigma\left(\gamma\left(\boldsymbol{a}_i^{\top} (\boldsymbol{x}-\boldsymbol{x}_c)+r_i\right)\right)\}_{i=1}^{I}$ (uniformly distributed in $\Omega\times(0,T]$).}
    
    \STATE{Generate collocation points in $\Omega\times(0,T]$ by uniform sampling.}

    \STATE{Optimize the output-layer parameters $\boldsymbol{\alpha}$ by minimizing the loss function \cref{equ: TransNet_loss} associated with the PDE problem \cref{equ: TransNet_PDE} over the collocation points using a specified least-squares solver.}

    \RETURN{$u_{\mathrm{NN}}$ according to \cref{equ: u_NN}.}
    \end{algorithmic}
\end{algorithm}

Extensive numerical experiments  indicate that TransNet has much higher accuracy (numerically exponential-type convergence with respect to the number of neurons) and lower computing cost compared to PINN for various PDE problems with relatively smooth solutions \cite{zhang2024transferable}. 
However, numerical experiments in \cref{tab: case2_ablation_nu} reveal TransNet's instability when solving Biot model.
To address this limitation, we subsequently propose HFS-TransNet.

\subsection{Hybrid fixed-stress transferable neural network}
\label{sec: met_hfs} 
The HFS-TransNet comprises two modules. One is a physics-informed module (also named FS-TransNet), which integrates the fixed-stress splitting scheme into the TransNet. The other is a data-driven module designed to initialize iterative weights and to prevent model degradation. Detailed descriptions of these two modules are given below.

\subsubsection{Physics-informed module}
\label{sec: met_hfs_pi}
Motivated by the fixed-stress splitting method, we propose the fixed-stress transferable neural network (FS-TransNet), a physics-informed approach that combines the unconditional stability of the fixed-stress splitting scheme \cite{kim2011stability} with the high accuracy and efficiency of TransNet \cite{zhang2024transferable}. 
Given an initial guess $(\boldsymbol{u}_{\mathrm{NN}}^0, p_{\mathrm{NN}}^0)=\boldsymbol{0}$, the FS-TransNet procedure is described as follows. 
\newline
\textbf{Step 1}: Given $\boldsymbol{u}_{\mathrm{NN}}^n$ and $p_{\mathrm{NN}}^n$, we solve for $p_{\mathrm{NN}}^{n+1}$ to approximate $p^{n+1}$, satisfying
\begin{align}
    \left( \frac{1}{M} + \beta_{FS} \right) \partial_t p^{n+1} - \nabla \cdot K \nabla p^{n+1} &= f - \alpha \nabla \cdot \partial_t \boldsymbol{u}_{\mathrm{NN}}^n + \beta_{FS} \partial_t p_{\mathrm{NN}}^n \quad &\text{in } \Omega \times (0, T], \label{equ: Step 1 PDE NN} \\
    p^{n+1} &= p_d \quad &\text{on } \Gamma_p^D\times (0, T], \label{equ: Step 1 Dirichlet NN} \\
    K\nabla p^{n+1} \cdot \boldsymbol{n} &= q \quad &\text{on } \Gamma_p^N\times (0, T], \label{equ: Step 1 Neumann NN} \\
    p^{n+1} &= p_0 \quad &\text{in } \Omega \times \{0\}, \label{equ: Step 1 Initial NN}
\end{align}
\textbf{Step 2}: Using $p_{\mathrm{NN}}^{n+1}$, we solve for $\boldsymbol{u}_{\mathrm{NN}}^{n+1}$ to approximate $\boldsymbol{u}^{n+1}$, satisfying
\begin{align}
    -\nabla \cdot \sigma(\boldsymbol{u}^{n+1}) &= \boldsymbol{g} - \alpha \nabla p_{\mathrm{NN}}^{n+1} \quad &\text{in } \Omega \times (0, T], \label{equ: Step 2 PDE NN} \\
    \boldsymbol{u}^{n+1} &= \boldsymbol{u}_d \quad &\text{on } \Gamma_{\boldsymbol{u}}^D\times (0, T], \label{equ: Step 2 Dirichlet NN} \\
    \sigma(\boldsymbol{u}^{n+1})\boldsymbol{n} &= \boldsymbol{h} + \alpha p_{\mathrm{NN}}^{n+1}\boldsymbol{n} \quad &\text{on } \Gamma_{\boldsymbol{u}}^N\times (0, T], \label{equ: Step 2 Neumann NN} \\
    \boldsymbol{u}^{n+1} &= \boldsymbol{u}_0 \quad &\text{in } \Omega \times \{0\}, \label{equ: Step 2 Initial NN}
\end{align}

Notably, the three fields share one frozen hidden-feature map but have separate output layers.
Specifically, the $x$- and $y$-components of displacement, $u_{\mathrm{NN}}(\boldsymbol{x}; \boldsymbol{\alpha}_u)$ and $v_{\mathrm{NN}}(\boldsymbol{x}; \boldsymbol{\alpha}_v)$, as well as the pressure $p_{\mathrm{NN}}(\boldsymbol{x}; \boldsymbol{\alpha}_p)$ share the same set of pretrained hidden-layer parameters.
As shown in \cref{fig: FS-TransNet}, each iteration of the FS-TransNet method involves two steps: using $\boldsymbol{\alpha}_u^{n}, \boldsymbol{\alpha}_v^{n}, \boldsymbol{\alpha}_p^{n}$ to update $\boldsymbol{\alpha}_p^{n+1}$, and then using $\boldsymbol{\alpha}_p^{n+1}$ to update $\boldsymbol{\alpha}_u^{n+1}, \boldsymbol{\alpha}_v^{n+1}$.
\cref{alg: FS-TransNet} summarizes the implementation details of the physics-informed FS-TransNet for Biot model.
\begin{figure}[tbhp]
    \centering
	\includegraphics[width=.9\textwidth]{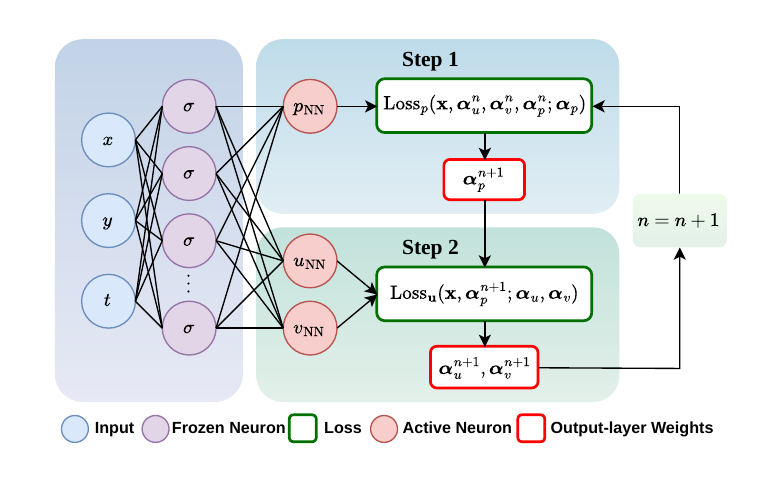}
	\vspace{-.8cm}
	\caption{The flowchart of the physics-informed FS-TransNet method.}
	\label{fig: FS-TransNet}
\end{figure}
\begin{algorithm}
	\caption{FS-TransNet method for Biot model}
	\label{alg: FS-TransNet}
    \begin{algorithmic}[1]
	\STATE{\textbf{Input}: The spatial-temporal domain $\Omega\times(0,T]$, the point $\boldsymbol{x}_c$, the radius $R$, the shape parameter $\gamma$, the number of hidden-layer neurons $I$, the maximum number of iterations $L$.
	}
	\STATE{\textbf{Output}: The FS-TransNet solution $u_{\mathrm{NN}}, v_{\mathrm{NN}}, p_{\mathrm{NN}}$.}

    \STATE{Generate the set of hidden-layer neurons $\bigl\{\sigma\!\left(\gamma\!\left(\boldsymbol{a}_i^{\top} (\boldsymbol{x}-\boldsymbol{x}_c)+r_i\right)\right)\bigr\}_{i=1}^{I}$ (uniformly distributed in $\Omega\times(0,T]$).}

    \STATE{Generate collocation points in $\Omega\times(0,T]$.}
    
    \STATE{Initialize the output-layer weights $\boldsymbol{\alpha}_u^0=\boldsymbol{\alpha}_v^0=\boldsymbol{\alpha}_p^0=\boldsymbol{0}$.}
    
    \FOR{$n = 0:(L-1)$}

    \STATE{\textbf{Step 1}: Update $\boldsymbol{\alpha}_p^{n+1}$ by minimizing the loss \cref{equ: TransNet_loss} associated with system \cref{equ: Step 1 PDE NN}--\cref{equ: Step 1 Initial NN} over the collocation points in $\Omega\times(0,T]$ using $\boldsymbol{\alpha}_u^{n}, \boldsymbol{\alpha}_v^{n}, \boldsymbol{\alpha}_p^{n}$.}

    \STATE{\textbf{Step 2}: Update $\boldsymbol{\alpha}_u^{n+1}, \boldsymbol{\alpha}_v^{n+1}$ by minimizing the loss \cref{equ: TransNet_loss} associated with system \cref{equ: Step 2 PDE NN}--\cref{equ: Step 2 Initial NN} over the collocation points in $\Omega\times(0,T]$ using $\boldsymbol{\alpha}_p^{n+1}$.}

    \ENDFOR
    
    \RETURN{$u_{\mathrm{NN}}=u_{\mathrm{NN}}^L, v_{\mathrm{NN}}=v_{\mathrm{NN}}^L, p_{\mathrm{NN}}=p_{\mathrm{NN}}^L$ according to \cref{equ: u_NN}.}
    \end{algorithmic}
\end{algorithm}

Rooted in the governing PDEs, physics-informed methods provide greater interpretability than data-driven approaches and demonstrate a superior ability to extrapolate beyond the observed data domain, which results in better generalization \cite{liverani2025hyco}.
Consequently, we employ FS-TransNet as our physics-informed module. 
However, FS-TransNet faces two distinct challenges. 
First, the use of zero initialization in \cref{alg: FS-TransNet} is overly simplistic. A proper initialization could substantially reduce the number of iterations required for convergence, thereby lowering the computational cost. 
Second, the iteration count depends on a fixed hyperparameter $L$. A proper iterative update mechanism combined with a relative tolerance criterion eliminates the reliance on manual hyperparameter tuning, thereby further enhancing the generalization capability of the method.
In the following, we introduce a data-driven module to address these two challenges.

\subsubsection{Data-driven module}
\label{sec: met_hfs_dd}
In this section, the original zero initialization in the physics-informed module is replaced by a data-driven initialization. Meanwhile, a greedy update strategy based on data-driven validation is employed for the iteration weights, with a relative tolerance serving as the termination criterion.
First, the computational domains for the two modules are illustrated in \cref{fig: Domain}. 
The observed data in the data-driven module are divided into initialization and validation domains, while the physics-informed module covers both observed and unobserved data.
Next, we provide a detailed description of the TransNet-based data-driven module including its initialization and validation components.
\begin{figure}[tbhp]
    \centering
	\includegraphics[width=.9\textwidth]{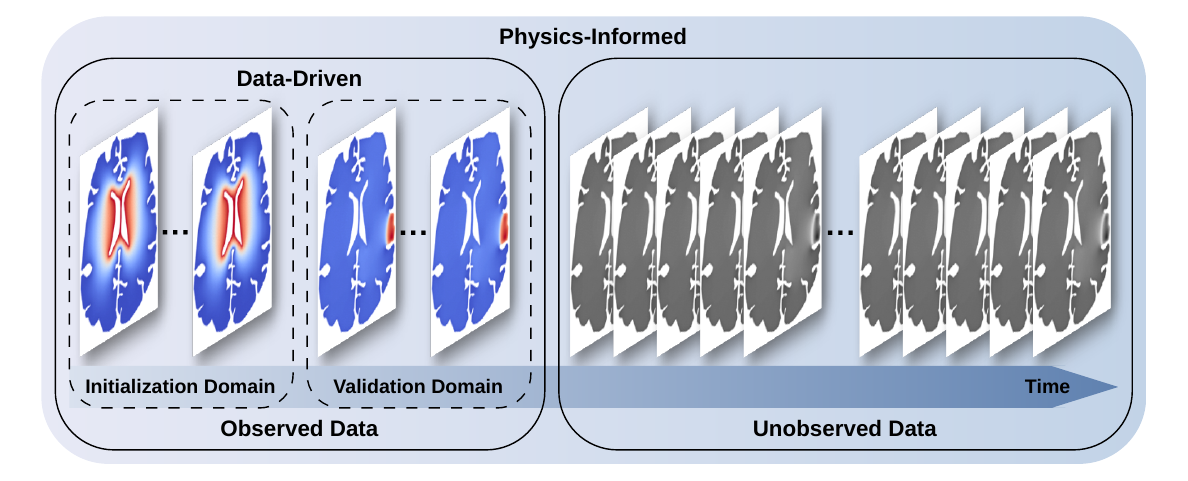}
	\vspace{-.4cm}
	\caption{Computational domains of the physics-informed and data-driven modules in HFS-TransNet.}
	\label{fig: Domain}
\end{figure}

During the initialization, to obtain the appropriate initial parameters $\{\boldsymbol{\alpha}_u^0, \boldsymbol{\alpha}_v^0, \boldsymbol{\alpha}_p^0\}$ in \cref{alg: FS-TransNet}, TransNet is utilized to approximate the observed data within the initialization domain.
The data-driven TransNet follows the same procedure as described in Section \ref{sec: pre_TransNet}, except that the loss function \cref{equ: TransNet_loss} is replaced by \cref{equ: TransNet_loss_data}: 
\begin{equation}\label{equ: TransNet_loss_data}
    {\mathrm{Loss}_\phi} = \|\phi_{\mathrm{init}}^{\mathrm{obs}}(\boldsymbol{x}) - \phi_{\mathrm{NN}}^0(\boldsymbol{x})\|_2^2, \quad \phi \in \{u, v, p\},
\end{equation}
where $\phi_{\mathrm{init}}^{\mathrm{obs}}(\boldsymbol{x})$ denotes the observed data in the initialization domain.

In the validation, a greedy update strategy is employed to prevent model degradation. The network approximation error is defined as follows:
\begin{equation}\label{equ: NN_error}
    \Delta \phi_{\mathrm{NN}}^{n+1}= \|\phi_{\mathrm{val}}^{\mathrm{obs}}(\boldsymbol{x}) - \phi_{\mathrm{NN}}^{n+1}(\boldsymbol{x})\|_2^2, \quad \phi \in \{u, v, p\}, 
\end{equation}
where $\phi_{\mathrm{val}}^{\mathrm{obs}}(\boldsymbol{x})$ denotes the observed data in the validation domain.
If $\Delta \phi_{\mathrm{NN}}^{n+1} < \Delta \phi_{\mathrm{NN}}^{n}$, the iteration is considered valid, and $\boldsymbol{\alpha}_\phi^{n+1}$ is used to overwrite $\boldsymbol{\alpha}_\phi^{n}$. 
Otherwise, the previous parameters $\boldsymbol{\alpha}_\phi^{n}$ are retained.
Furthermore, a relative tolerance $\mathrm{Tol}$ is introduced to determine iterative convergence. 
The iteration terminates once the condition $\| \boldsymbol{\alpha}_\phi^{n+1} - \boldsymbol{\alpha}_\phi^n \|_2 / \| \boldsymbol{\alpha}_\phi^n \|_2 < \mathrm{Tol}$ is satisfied for all $\phi \in \{u, v, p\}$.

Having established both modules, the overall workflow of the HFS-TransNet algorithm is illustrated in \cref{fig: HFS-TransNet}.
First, the data-driven module is employed to initialize the output-layer weights $\boldsymbol{\alpha}_u^0,\boldsymbol{\alpha}_v^0,\boldsymbol{\alpha}_p^0$ corresponding to the scalar variables $u, v, p$ respectively. 
Subsequently, the physics-informed module commences the fixed-stress splitting iteration. 
After each iteration, a validation step based on the data-driven module is carried out to perform a greedy update, after which the tolerance $\mathrm{Tol}$ is used to determine whether the iterative process should terminate.
\cref{alg: HFS-TransNet} summarizes the main steps of the HFS-TransNet for Biot model.
\begin{figure}[tbhp]
    \centering
	\includegraphics[width=.9\textwidth]{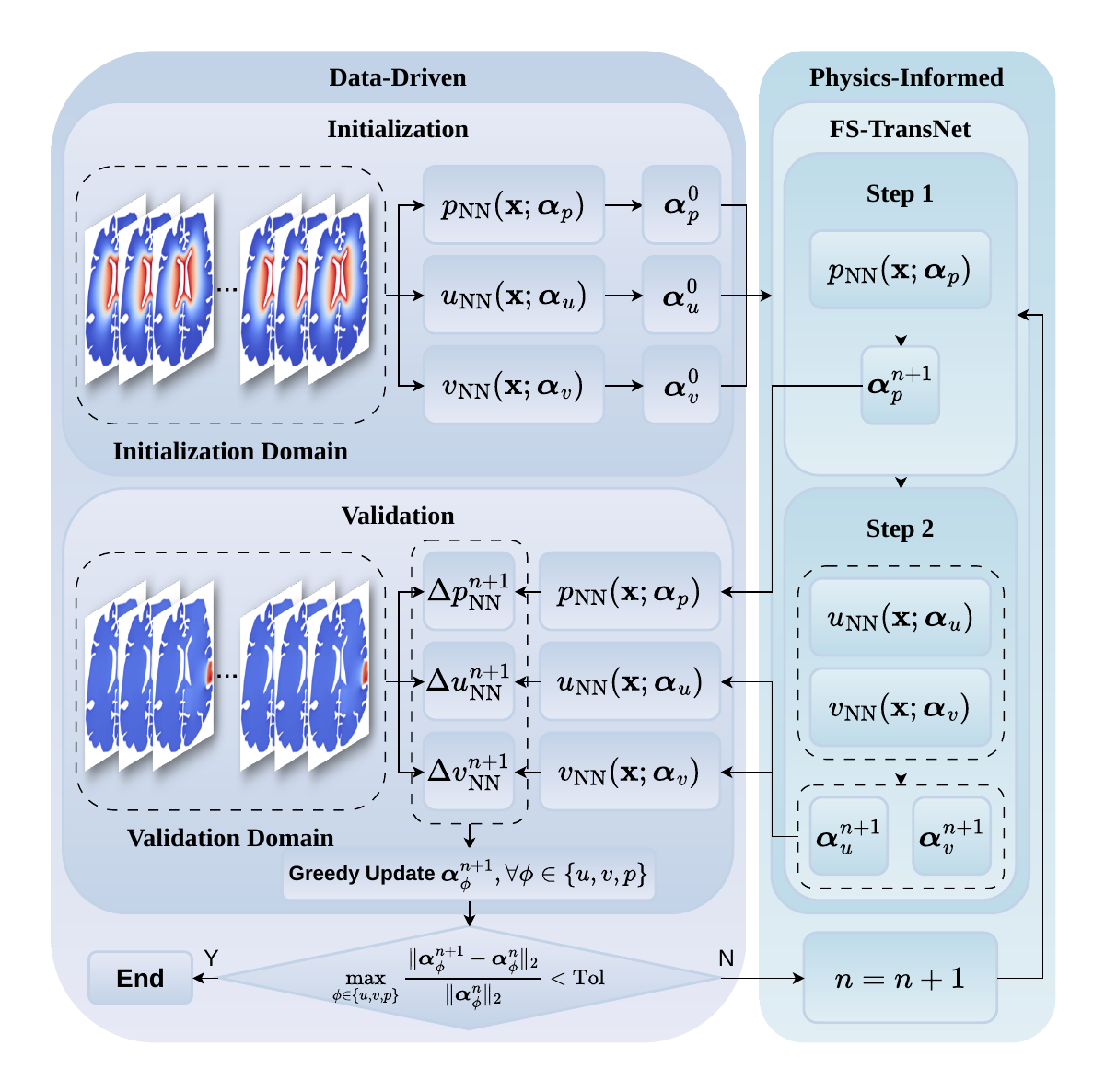}
	\vspace{-.8cm}
	\caption{The flowchart of the hybrid fixed-stress transferable neural network (HFS-TransNet).}
	\label{fig: HFS-TransNet}
\end{figure}
\begin{remark}
    In practice, if observational data are unavailable, the initialization can be replaced by a physics-informed approach, and the validation stage is bypassed, thereby reducing the reliance on labeled data, as demonstrated in Section \ref{sec: ne_md}.
\end{remark}

\begin{algorithm}
	\caption{HFS-TransNet method for Biot model}
	\label{alg: HFS-TransNet}
    \begin{algorithmic}[1]
	\STATE{\textbf{Input}: The spatial-temporal domain $\Omega\times(0,T]$, the initialization domain $\Omega\times(0,T_1]$, the validation domain $\Omega\times(T_1,T_2]$, the point $\boldsymbol{x}_c$, the radius $R$, the shape parameter $\gamma$, the number of hidden-layer neurons $I$, the maximum number of iterations $L$, the tolerance $\mathrm{Tol}$.
	}
	\STATE{\textbf{Output}: The HFS-TransNet solution $u_{\mathrm{NN}}, v_{\mathrm{NN}}, p_{\mathrm{NN}}$.}

    \STATE{Generate the set of hidden-layer neurons $\{\sigma\left(\gamma\left(\boldsymbol{a}_i^{\top} (\boldsymbol{x}-\boldsymbol{x}_c)+r_i\right)\right)\}_{i=1}^{I}$ (uniformly distributed in $\Omega\times(0,T]$).}

    \STATE{Generate collocation points in $\Omega\times(0,T_1]$, $\Omega\times(T_1,T_2]$, and $\Omega\times(0,T]$.}

    \STATE{Optimize the output-layer parameters $\boldsymbol{\alpha}_u^0, \boldsymbol{\alpha}_v^0, \boldsymbol{\alpha}_p^0$ by minimizing the data‑driven loss \cref{equ: TransNet_loss_data} over the collocation points in $\Omega\times(0,T_1]$.}
    
    \FOR{$n = 0:(L-1)$}
    \STATE{\textbf{Step 1}: Update $\boldsymbol{\alpha}_p^{n+1}$ by minimizing the loss \cref{equ: TransNet_loss} associated with system \cref{equ: Step 1 PDE NN}--\cref{equ: Step 1 Initial NN} over the collocation points in $\Omega\times(0,T]$ using $\boldsymbol{\alpha}_u^{n}, \boldsymbol{\alpha}_v^{n}, \boldsymbol{\alpha}_p^{n}$.}

    \STATE{\textbf{Step 2}: Update $\boldsymbol{\alpha}_u^{n+1}, \boldsymbol{\alpha}_v^{n+1}$ by minimizing the loss \cref{equ: TransNet_loss} associated with system \cref{equ: Step 2 PDE NN}--\cref{equ: Step 2 Initial NN} over the collocation points in $\Omega\times(0,T]$ using $\boldsymbol{\alpha}_p^{n+1}$.}

    \STATE{Compute the network approximation errors $\Delta \phi_{\mathrm{NN}}^{n+1}$ ($\phi \in \{u, v, p\}$) in \cref{equ: NN_error} over the collocation points in $\Omega\times(T_1,T_2]$ using $\boldsymbol{\alpha}_u^{n+1}, \boldsymbol{\alpha}_v^{n+1}, \boldsymbol{\alpha}_p^{n+1}$.}

    \FOR{$\phi \in \{u, v, p\}$}
    \IF{$\Delta \phi_{\mathrm{NN}}^{n+1} \geq \Delta \phi_{\mathrm{NN}}^{n}$}
        \STATE{$\boldsymbol{\alpha}_\phi^{n+1} \leftarrow \boldsymbol{\alpha}_\phi^{n}$.}
    \ENDIF
    \ENDFOR
    \IF{$\max_{\phi \in \{u, v, p\}} \| \boldsymbol{\alpha}_\phi^{n+1} - \boldsymbol{\alpha}_\phi^n \|_2 / \| \boldsymbol{\alpha}_\phi^n \|_2 < \mathrm{Tol}$}
        \STATE{\textbf{break}.}
    \ENDIF
    \ENDFOR
    \RETURN{$u_{\mathrm{NN}}, v_{\mathrm{NN}}, p_{\mathrm{NN}}$ according to \cref{equ: u_NN}.}
    \end{algorithmic}
\end{algorithm}

\subsection{Numerical analysis}
\label{sec: met_na}
In this section, we conduct a numerical analysis of the HFS-TransNet method. 
Specifically, under an approximability assumption, we provide an \textit{a priori} error estimate for the physics-informed module (i.e., FS-TransNet) and derive its error upper bound.

The numerical error of the FS-TransNet method can be decomposed into two terms: the network approximation error and the iteration error, as follows:
\begin{equation}
    \left\{
    \begin{aligned}
        \boldsymbol{u}_{\mathrm{NN}}^{n+1} - \boldsymbol{u} &= \left(\boldsymbol{u}_{\mathrm{NN}}^{n+1} - \boldsymbol{u}^{n+1}\right) + \left(\boldsymbol{u}^{n+1} - \boldsymbol{u}\right), \\
        p_{\mathrm{NN}}^{n+1} - p &= \left(p_{\mathrm{NN}}^{n+1} - p^{n+1}\right) + \left(p^{n+1} - p\right). 
    \end{aligned}\label{equ: err_parts}
    \right.
\end{equation}
For the network approximation errors $\left(\boldsymbol{u}_{\mathrm{NN}}^{n+1} - \boldsymbol{u}^{n+1}\right)$ and $\left(p_{\mathrm{NN}}^{n+1} - p^{n+1}\right)$, we assume, based on the universal approximation theorem for single-hidden-layer feedforward networks (SLFNs) \cite{hornik1991approximation}, that the TransNet can approximate the solutions of the two subproblems within certain error bounds. 
For the iteration errors $e_{\boldsymbol{u}}^{n+1}:=\boldsymbol{u}^{n+1} - \boldsymbol{u}$ and $e_{p}^{n+1}:=p^{n+1} - p$, which satisfy homogeneous boundary conditions, we derive an error estimate for $\partial_t e_p^{n+1}$ via the fixed-stress splitting method.

First, we introduce the bound of the network approximation error. From \cref{equ: u_NN}, the space of SLFNs is defined as follows:
\begin{equation}
    V_I^\sigma = \left\{ u_{\mathrm{NN}}: \mathbb{R}^d \to \mathbb{R} \ \bigg|\ u_{\mathrm{NN}}(\boldsymbol{x})=\sum_{i=1}^I \alpha_i \sigma\left(\boldsymbol{w}_i^{\top} \boldsymbol{x}+b_i\right) \right\}.
\end{equation}
The universal approximation property of SLFNs has been established in \cite{hornik1991approximation}.
\begin{theorem}[Universal Approximation \cite{hornik1991approximation}]\label{thm: univeral approximation}
    Suppose $ 1 \leq p < \infty, 0 \leq s < \infty $, and $\Omega \subset \mathbb{R}^d$ is bounded. If $ \sigma \in C^s(\Omega)$ is non-constant and bounded, then the space
    $V^\sigma := \bigcup_{I=1}^\infty V_I^\sigma$
    is dense in the Sobolev space
\begin{equation}
    W^{s,p}(\Omega) := \{ v \in L^p(\Omega) : D^\alpha v \in L^p(\Omega),\ \forall |\alpha| \leq s \}. \nonumber
\end{equation}
\end{theorem}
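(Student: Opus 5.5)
We would prove the density of $V^\sigma$ in $W^{s,p}(\Omega)$ by duality, following Hornik's argument. The set $V^\sigma$ is a linear subspace, so by Hahn--Banach it suffices to show that every bounded linear functional $\ell$ on $W^{s,p}(\Omega)$ that vanishes on $V^\sigma$ is identically zero. Since $1\le p<\infty$, such a functional can be represented through the embedding $v\mapsto (D^\alpha v)_{|\alpha|\le s}$ into a finite product of $L^p(\Omega)$ spaces. Concretely, there are $g_\alpha\in L^{q}(\Omega)$ with $1/p+1/q=1$ such that
\begin{equation*}
\ell(v)=\sum_{|\alpha|\le s}\int_\Omega g_\alpha\, D^\alpha v\,dx .
\end{equation*}
Vanishing on $V^\sigma$ then means $\ell\bigl(\sigma(\boldsymbol{w}^\top\boldsymbol{x}+b)\bigr)=0$ for all $\boldsymbol{w}\in\mathbb{R}^d$ and $b\in\mathbb{R}$.

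Next we would pass to a distributional statement. Extend the $g_\alpha$ by zero outside the bounded set $\Omega$; they are then compactly supported and lie in $L^1$. Set $T=\sum_\alpha(-1)^{|\alpha|}D^\alpha g_\alpha$, a compactly supported distribution. The annihilation condition becomes $\langle T,\sigma(\boldsymbol{w}^\top\cdot+b)\rangle=0$ for all ridge functions built from $\sigma$. Here we use that $\sigma\in C^s$ and is bounded, interpreting this as bounded derivatives up to order $s$ on the relevant range, so that each $D^\alpha\sigma(\boldsymbol{w}^\top x+b)$ is bounded on $\Omega$ and pairs with $g_\alpha$.

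We would then reduce to one dimension by slicing along directions. Fix a unit vector $\boldsymbol{a}$ and a scale $\gamma>0$. Pushing $T$ forward under $x\mapsto \boldsymbol{a}^\top x$ gives a compactly supported distribution $T_{\boldsymbol{a}}$ on $\mathbb{R}$ satisfying $\langle T_{\boldsymbol{a}},\sigma(\gamma t+b)\rangle=0$ for all $\gamma>0$ and $b\in\mathbb{R}$. The goal is to show $T_{\boldsymbol{a}}=0$ for every $\boldsymbol{a}$. By the Fourier slice theorem, $\widehat{T}$ restricted to each line through the origin is then zero, hence $T=0$. Since $T=0$ for every annihilating $\ell$, the functional $\ell$ vanishes, giving density.

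The main obstacle is the one-dimensional step. We must show that a compactly supported distribution $S$ on $\mathbb{R}$ annihilating all dilates and translates of a bounded non-constant $\sigma$ must vanish. The plan is to convolve $\sigma$ with a mollifier $\psi$; mollified translates remain in the closed span, so the hypothesis transfers to $\sigma*\psi$. Choose $\psi$ so that $\sigma*\psi$ is smooth, bounded and non-constant. From $\langle S,(\sigma*\psi)(\gamma\cdot+b)\rangle=0$ and Fourier analysis in $b$, we would deduce that $\widehat S(\xi)\,\widehat{\sigma*\psi}(\xi/\gamma)=0$ in the distributional sense for all $\gamma$. Because $\sigma*\psi$ is non-constant, its Fourier transform (a tempered distribution) has support containing some $\xi_0\neq0$. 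Letting $\gamma$ vary, $\widehat S$ must vanish near every nonzero frequency, so $\operatorname{supp}\widehat S\subset\{0\}$. But $\widehat S$ is entire by Paley--Wiener, since $S$ has compact support, so $\widehat S\equiv0$.

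The delicate points are the rigorous handling of the product of the distribution $\widehat{\sigma*\psi}$ with the entire function $\widehat S$, and the justification that differentiation under the pairing, needed for the $W^{s,p}$ terms, is legitimate. Both rely on the boundedness of $\sigma$ and its derivatives.
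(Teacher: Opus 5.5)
The paper quotes this result from Hornik without proof, so your argument has to stand on its own. Most of it is sound: the duality representation, the slicing to one dimension, and the one-dimensional Fourier step all work. In that step, $\widehat{S}$ is entire with polynomially bounded derivatives on $\mathbb{R}$, so it is a multiplier on tempered distributions and the product $\widehat{S}\,\widehat{\sigma*\psi}(\cdot/\gamma)$ makes sense. Also, $\gamma>0$ alone only gives $\widehat{S}=0$ on a half-line, but analyticity finishes.

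The genuine gap is the sentence ``Since $T=0$ for every annihilating $\ell$, the functional $\ell$ vanishes.'' For $s=0$ this is fine, because $T=g_0$. For $s\ge 1$, however, $T=0$ only tells you that $\ell(\phi|_\Omega)=\langle T,\phi\rangle=0$ for all $\phi\in C_c^\infty(\mathbb{R}^d)$. To conclude $\ell=0$ you need the restrictions $\{\phi|_\Omega:\phi\in C_c^\infty(\mathbb{R}^d)\}$ to be dense in $W^{s,p}(\Omega)$. That density is a regularity property of $\Omega$ (it holds under the segment condition, e.g.\ for Lipschitz boundaries), not a consequence of boundedness.

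This step cannot be closed without an extra hypothesis, because the statement as written is false for $s\ge 1$. Take the slit disk $\Omega=B_1(0)\setminus\{(x,0):0\le x<1\}\subset\mathbb{R}^2$. Set $u(x,y)=\eta(x)\zeta(y)$ for $y>0$ and $u=0$ for $y\le 0$, where $0\ne\eta\in C_c^\infty(1/4,3/4)$ and $\zeta\in C_c^\infty(-1/4,1/4)$ with $\zeta(0)=1$. The jump of $u$ lies on the slit, which is not part of $\Omega$. Hence $u\in C^\infty(\Omega)$ with bounded derivatives, and $u\in W^{s,p}(\Omega)$ for all $s,p$.

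Now let $J:W^{1,p}(\Omega)\to L^p(1/4,3/4)$ be the difference of the traces on $(1/4,3/4)\times\{0\}$ taken from the rectangles $(1/4,3/4)\times(0,1/4)$ and $(1/4,3/4)\times(-1/4,0)$. This map is continuous, hence continuous on $W^{s,p}(\Omega)$ for $s\ge1$. It vanishes on restrictions of $C^1(\mathbb{R}^2)$ functions, in particular on $V^\sigma$, but $J(u)=\eta\ne 0$. Since $\ker J$ is closed and contains $V^\sigma$, it follows that $u\notin\overline{V^\sigma}$.

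What your argument actually proves for an arbitrary bounded open $\Omega$ is that the $W^{s,p}(\Omega)$-closure of $V^\sigma$ coincides with that of $\{\phi|_\Omega:\phi\in C_c^\infty(\mathbb{R}^d)\}$. To obtain the stated density, you must add a hypothesis such as the segment condition on $\Omega$ and invoke the corresponding density theorem at exactly this point. Hornik's derivative results are likewise formulated for approximating $C^m$ functions, so the passage to $W^{s,p}(\Omega)$ needs this same step.
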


Assume that $(\boldsymbol{u}, p)$ is the solution of the problem \cref{equ: fluid mass conservation}--\cref{equ: initial condition}. 
Let $(\boldsymbol{u}^{n+1}, p^{n+1})$ denote the solution of fixed-stress splitting method, and $(\boldsymbol{u}_{\mathrm{NN}}^{n+1}, p_{\mathrm{NN}}^{n+1})$ denote the corresponding network approximations obtained from \cref{equ: Step 1 PDE NN}--\cref{equ: Step 2 Initial NN}.
According to \cref{thm: univeral approximation}, for any $\delta > 0$, there exists a sufficiently large $I$ such that
\begin{equation}\label{equ: ass}
    \|\boldsymbol{u}_{\mathrm{NN}}^{n+1} - \boldsymbol{u}^{n+1}\|_{H^1([0,T];\boldsymbol{H}^1(\Omega))} \leq \delta, \quad \|p_{\mathrm{NN}}^{n+1} - p^{n+1}\|_{H^1([0,T];\boldsymbol{H}^1(\Omega))} \leq \delta.
\end{equation}
Then, we establish the error estimate of the iteration \cref{equ: Step 1 PDE NN}--\cref{equ: Step 2 Initial NN}.

\begin{proposition}\label{pro: FS-TransNet err_estimate}
    Under assumption \cref{equ: ass}, if $\beta_{FS}$ is sufficiently large, then there exists a positive number $\tilde{L} < 1$ such that the following estimate holds for all $n \geq 0$,
    \begin{equation}\label{equ: FS-TransNet error contraction}
        \int_0^T \int_{\Omega} \left(\partial_t e_p^{n+1}\right)^2 dxdt \leq \tilde{L} \int_0^T \int_{\Omega} \left(\partial_t e_p^n\right)^2 dxdt + C\delta^2,
    \end{equation}
    where $C$ is a positive constant. Moreover, the limit superior exists and satisfies 
    \begin{equation}\label{equ: FS-TransNet error upper bound}
        \limsup_{n \to \infty} \int_0^T \int_{\Omega} \left(\partial_t e_p^{n+1}\right)^2 dxdt \leq C \frac{\delta^2}{1-\tilde{L}}.
    \end{equation}
\end{proposition}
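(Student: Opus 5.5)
The plan is to adapt the classical contraction argument for the fixed-stress splitting (Kim et al., Mikeli\'c--Wheeler) and carry along perturbation terms coming from the network approximation errors in \cref{equ: ass}. Subtracting the exact Biot system \cref{equ: fluid mass conservation}--\cref{equ: solid force equilibrium} from the Step~1 problem \cref{equ: Step 1 PDE NN}, and writing $\boldsymbol{u}_{\mathrm{NN}}^n = \boldsymbol{u}^n + \eta_{\boldsymbol{u}}^n$ and $p_{\mathrm{NN}}^n = p^n + \eta_p^n$ with $\eta$ the network errors, gives the error equation
\begin{equation*}
\Bigl(\tfrac{1}{M}+\beta_{FS}\Bigr)\partial_t e_p^{n+1} - \nabla\cdot K\nabla e_p^{n+1} = -\alpha\nabla\cdot\partial_t e_{\boldsymbol{u}}^{n} + \beta_{FS}\partial_t e_p^{n} - \alpha\nabla\cdot\partial_t\eta_{\boldsymbol{u}}^n + \beta_{FS}\partial_t\eta_p^n ,
\end{equation*}
with homogeneous boundary and initial data. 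Similarly, Step~2 gives
\begin{equation*}
-\nabla\cdot\sigma(e_{\boldsymbol{u}}^{n+1}) + \alpha\nabla e_p^{n+1} = -\alpha\nabla\eta_p^{n+1}.
\end{equation*}
This holds for every time, so it may be differentiated in $t$. Set $\delta_p^{n} := \partial_t e_p^{n}$.

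\textbf{Step~1 (mechanics bound).} Test the time-differentiated elasticity error equation with $\partial_t e_{\boldsymbol{u}}^{n+1}$. Using Korn's inequality and $\|\nabla\cdot \boldsymbol{w}\|^2 \le \frac{d}{1}\|\varepsilon(\boldsymbol{w})\|^2$, this yields
\begin{equation*}
\|\alpha\nabla\cdot\partial_t e_{\boldsymbol{u}}^{n+1}\| \le \frac{\alpha^2}{K_{dr}}\|\delta_p^{n+1}\| + C\|\partial_t\eta_p^{n+1}\|,
\end{equation*}
where $K_{dr}=2\mu/d+\lambda$. The $\eta$-term is controlled in $L^2(0,T;H^1)$ by $\delta$ through \cref{equ: ass}. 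I will use the sharper identity from the classical proof, in which $\alpha\nabla\cdot\partial_t e_{\boldsymbol{u}}$ is expressed against $\delta_p$ up to an $\eta$-remainder.

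\textbf{Step~2 (flow equation).} Test the flow error equation with $\delta_p^{n+1}$ and integrate over $(0,T)$. The diffusion term gives $\tfrac12\|K^{1/2}\nabla e_p^{n+1}(T)\|^2 \ge 0$, which can be dropped since the initial error is zero. The right-hand side is written as
\begin{equation*}
\bigl(\beta_{FS}\delta_p^n - \alpha\nabla\cdot\partial_t e_{\boldsymbol{u}}^n,\ \delta_p^{n+1}\bigr) + \text{perturbations}.
\end{equation*}
Following Mikeli\'c--Wheeler, with $\beta_{FS}\ge \alpha^2/K_{dr}$ (this is the meaning of ``sufficiently large''), combining with the Step~1 relation at level $n$ gives $\|\beta_{FS}\delta_p^n - \alpha\nabla\cdot\partial_t e_{\boldsymbol{u}}^n\| \le \beta_{FS}\|\delta_p^n\| + C\delta$. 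Cauchy--Schwarz and Young with a small parameter $\epsilon$ then give
\begin{equation*}
\Bigl(\tfrac1M+\beta_{FS}\Bigr)\|\delta_p^{n+1}\|^2 \le (1+\epsilon)\frac{\beta_{FS}^2}{\frac1M+\beta_{FS}}\|\delta_p^n\|^2 + C_\epsilon\delta^2 .
\end{equation*}
This yields \cref{equ: FS-TransNet error contraction} with
\begin{equation*}
\tilde L = (1+\epsilon)\Bigl(\frac{\beta_{FS}}{\frac1M+\beta_{FS}}\Bigr)^2 < 1
\end{equation*}
for $\epsilon$ small, using $M<\infty$. Alternatively, one can use the finer Kim estimate with $\beta_{FS}=\alpha^2/K_{dr}$ to get a rate independent of $M$.

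\textbf{Step~3 (limsup).} Iterating \cref{equ: FS-TransNet error contraction} gives
\begin{equation*}
a_{n+1} \le \tilde L^{\,n+1} a_0 + C\delta^2\sum_k \tilde L^k .
\end{equation*}
Hence the sequence is bounded, the limit superior is finite, and it is at most $C\delta^2/(1-\tilde L)$, which is \cref{equ: FS-TransNet error upper bound}.

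\textbf{Main obstacle.} The hard part is Step~2: getting a genuine contraction factor below one while the perturbation terms are present. The time derivative $\partial_t\eta$ appears together with the divergence, so the $H^1$-in-time, $H^1$-in-space assumption \cref{equ: ass} is exactly what is needed. The constants must be tracked so that $\epsilon$ can be absorbed without destroying $\tilde L<1$. If the sharp Mikeli\'c--Wheeler identity proves awkward with perturbations, I would fall back on the cruder $\beta_{FS}$-dependent bound above.
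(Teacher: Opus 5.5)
Your argument is correct, but it is organized differently from the paper's. The paper does the following:
\begin{itemize}
\item it tests the flow error equation with $\partial_t e_p^{n+1}$ and the time-differentiated mechanics error equation at level $n+1$ with $\partial_t e_{\boldsymbol{u}}^n$, and adds them so the coupling terms cancel;
\item it rewrites $\beta_{FS}(\partial_t e_p^{n+1},\partial_t e_p^{n+1}-\partial_t e_p^n)$ and the elastic cross terms with polarization identities;
\item it absorbs the resulting negative terms using a separate mechanics estimate for the difference $\partial_t e_{\boldsymbol{u}}^{n+1}-\partial_t e_{\boldsymbol{u}}^n$.
\end{itemize}
This requires $\beta_{FS}\ge(1+C\epsilon_1)\alpha^2/(2K_{dr})$ and gives $\tilde L=\bigl(\frac{\beta_{FS}}{2}+C_1\bigr)/\bigl(\frac1M+\frac{\beta_{FS}}{2}-(1+\beta_{FS})\epsilon_2\bigr)$. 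You instead isolate one perturbed fixed-stress lemma, $\|\beta_{FS}\partial_t e_p^n-\alpha\nabla\cdot\partial_t e_{\boldsymbol{u}}^n\|\le\beta_{FS}\|\partial_t e_p^n\|+C\delta$, and then test the flow equation once.

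Make sure you derive that lemma from the inner-product identity and not from your Step~1 norm bound. The triangle inequality would only give the factor $\beta_{FS}+\alpha^2/K_{dr}$, which destroys contraction. Set $z=\alpha\nabla\cdot\partial_t e_{\boldsymbol{u}}^n$ and $\zeta=\partial_t\eta_p^n$. The weak mechanics identity gives $\frac{K_{dr}}{\alpha^2}\|z\|^2\le(\partial_t e_p^n+\zeta,z)$, so
\[
\|\beta_{FS}\partial_t e_p^n-z\|^2\le\beta_{FS}^2\|\partial_t e_p^n\|^2+\Bigl(1-\frac{2\beta_{FS}K_{dr}}{\alpha^2}\Bigr)\|z\|^2+2\beta_{FS}\|\zeta\|\,\|z\|.
\]
For $\beta_{FS}\ge\alpha^2/K_{dr}$ the last two terms are at most $\beta_{FS}^2\|\zeta\|^2\le\beta_{FS}^2\delta^2$, which is your claim.

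Your route buys several things. It is shorter and gives $\tilde L=(1+\epsilon)\bigl(\beta_{FS}/(\frac1M+\beta_{FS})\bigr)^2$. It can be run using only $\|\partial_t\eta_p\|$ and $\|\nabla\cdot\partial_t\eta_{\boldsymbol{u}}\|$ in $L^2(0,T;L^2(\Omega))$. It also avoids two weak spots of the paper's proof:
\begin{itemize}
\item the factor $1/\lambda$ that enters $C_1$;
\item the bound $\|\partial_t e_{\boldsymbol{u}}^n\|^2\le C_p\|\nabla\cdot\partial_t e_{\boldsymbol{u}}^n\|^2$, which fails for general vector fields and would need Korn's inequality with $\|\varepsilon(\cdot)\|$ instead.
\end{itemize}
With one more Young step, using $\|z\|\le\frac{\alpha^2}{K_{dr}}(\|\partial_t e_p^n\|+\|\zeta\|)$, your argument even covers $\beta_{FS}=\alpha^2/(2K_{dr})$. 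That is the value used in the experiments, and it lies just below the paper's threshold.

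Two caveats. First, drop your aside about an $M$-independent rate; it is not available in this norm. Testing with $\partial_t e_p^{n+1}$, the diffusion term only produces $\frac K2\|\nabla e_p^{n+1}(T)\|^2$, which does not control $\|\partial_t e_p^{n+1}\|$. So for $1/M=0$ neither your argument nor the paper's yields $\tilde L<1$. Second, exactly as in the paper, the level-$n$ mechanics identity requires $\boldsymbol{u}_{\mathrm{NN}}^n$ to come from the mechanics subproblem at iteration $n-1$. Strictly, then, the contraction holds for $n\ge1$ unless the initial guess is mechanically consistent. This does not affect the $\limsup$ bound.
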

\cref{pro: FS-TransNet err_estimate} provides an error-propagation estimate of the FS-TransNet method.
For the sake of brevity, the proof is moved to Appendix \ref{app: fsTransNet}.

\begin{remark}
    \cref{pro: FS-TransNet err_estimate} should be interpreted with care.
    First, it is conditional on \cref{equ: ass}, which is an existence statement in the idealized limit $I\to\infty$.
    Second, even granting \cref{equ: ass}, \cref{equ: FS-TransNet error contraction} is a contraction up to an additive $\mathcal{O}(\delta^2)$ term, not a strict contraction to zero.
    Therefore, \cref{pro: FS-TransNet err_estimate} should be understood as a conditional error-propagation estimate for the fixed-stress iteration under inexact (network-approximated) subproblem solves.
\end{remark}
\begin{remark}
    While a rigorous numerical analysis of the data-driven module remains a theoretical challenge and is reserved for future research, its practical effectiveness is well-supported by the numerical experiments in \Cref{sec: ne}.
    Specifically, the ablation study in \cref{fig: case1_ablation_iter} demonstrates that the data-driven initialization achieves significantly higher accuracy than a zero initialization baseline within the same number of iterations. 
    Furthermore, the results in \cref{tab: case1_ablation_init_val} show that the greedy update strategy and the relative tolerance criterion remain robust across varying initialization and validation domains. 
    These results collectively indicate that the data-driven module accelerates convergence without compromising the solution accuracy.
\end{remark}

\section{Numerical experiments}
\label{sec: ne}
In this section, we conduct numerical experiments to evaluate the performance of the proposed HFS-TransNet method on a series of Biot models in poroelasticity.
The implementation details including network parameters for all experiments are listed in Appendix \ref{app: para}.

As in \cite{lu2025multiple}, the discrete relative $\text{L}_2$ error (denoted as $\text{RL}_2$) and the discrete relative $\text{L}_{\infty}$ error (denoted as $\text{RL}_\infty$) are used to evaluate the network solution $u_{\mathrm{NN}}$
\begin{equation}
    \text{RL}_2(u) = \sqrt{\frac{\sum_{j=1}^{J} \left(u_{\mathrm{NN}}(\boldsymbol{x}_j) - u(\boldsymbol{x}_j)\right)^2}{\sum_{j=1}^{J} u^2(\boldsymbol{x}_j)}}, \quad
    \text{RL}_\infty(u) = \frac{\max\left\{|u_{\mathrm{NN}}(\boldsymbol{x}_j) - u(\boldsymbol{x}_j)|\right\}_{j=1}^{J}}{\max\left\{|u(\boldsymbol{x}_j)|\right\}_{j=1}^{J}}, \nonumber
\end{equation}
where $\boldsymbol{x}_j$ is the $j$-th test point and $J$ denotes the total number of test points.

\subsection{Test Case 1}
\label{sec: ne_case1}
We consider the quasi-static Biot model on the space-time domain $\Omega \times [0, T] = [0,1]^2 \times [0, 1]$, with pure Dirichlet boundary conditions applied as in \cite{cai2023combination, feng2018analysis}. The source $f$, body force $\boldsymbol{g}$, initial conditions, and boundary conditions are chosen such that the exact solutions are given as $\boldsymbol{u}(x,y,t) = \left(tx^2/2, ty^2/2\right)^T, p(x,y,t) = \sin(x+y) e^t$.
The model parameters are set to $E=1.0$, $\nu=0.3, 1/M=1.0, \alpha=1.0, K=1.0$.

\subsubsection{Error estimate}
\label{sec: ne_case1_err}

Combining \cref{equ: err_parts}, \cref{equ: ass}, and \cref{equ: FS-TransNet error upper bound} yields 
\begin{equation}
    \Delta := \max_{n\geq0}\|\partial_t p_{\mathrm{NN}}^{n+1} - \partial_t p\|_{L^2} \sim \mathcal{O}(\delta). \nonumber
\end{equation}
To verify this, we define the following errors over $\Lambda = \{1,2,3,4,5\}$ iterations:
\begin{equation}
    \left\{
    \begin{aligned}
    \Delta^\Lambda &:= \max_{n \in \Lambda} \|\partial_t p_{\mathrm{NN}}^{n+1} - \partial_t p\|_{L^2} \\
    \delta_{H^1}^\Lambda &:= \max_{n \in \Lambda} \left\{\|\boldsymbol{u}_{\mathrm{NN}}^{n+1} - \boldsymbol{u}^{n+1}\|_{H^1}, \|p_{\mathrm{NN}}^{n+1} - p^{n+1}\|_{H^1} \right\} \\
    \delta_{L^2}^\Lambda &:= \max_{n \in \Lambda} \left\{\|\boldsymbol{u}_{\mathrm{NN}}^{n+1} - \boldsymbol{u}^{n+1}\|_{L^2}, \|p_{\mathrm{NN}}^{n+1} - p^{n+1}\|_{L^2} \right\} \\
    \end{aligned}\nonumber
    \right.
\end{equation}
\cref{tab: case1_err} reports these quantities for increasing numbers of hidden-layer neurons, where the ``Orders'' column records the log-log slope of $\Delta^\Lambda$ against $\delta_{H^1}^\Lambda$ and $\delta_{L^2}^\Lambda$. 
The observed slopes remaining close to $1$ across all neuron counts confirm the predicted linear scaling $\Delta^\Lambda \sim \mathcal{O}(\delta^\Lambda)$, providing quantitative support for the error-propagation estimate in \cref{pro: FS-TransNet err_estimate}.
\begin{table}[htbp]
    \footnotesize
    \caption{Error estimate in Test Case 1.}
    \label{tab: case1_err}
    \begin{center}
    \begin{tabular}{c c c c}
    \hline
    Neurons & $\delta_{H^1}^\Lambda \, \& \, \delta_{L^2}^\Lambda$ & $\Delta^\Lambda$ & Orders\\
    \hline
    $20$ & 1.86e$-$1 \& 3.59e$-$2 & 1.40e$-$1 & - \\
    $40$ & 5.74e$-$2 \& 1.17e$-$2 & 4.34e$-$2 & 0.99 \& 1.04 \\
    $80$ & 1.47e$-$2 \& 3.33e$-$3 & 9.04e$-$3 & 1.15 \& 1.24 \\
    \hline
    \end{tabular}
    \end{center}
\end{table}

\subsubsection{Ablation for the number of iterations}
\label{sec: ne_case1_iter}
In this part, we compare the accuracy of the two iterative algorithms incorporating the fixed-stress (FS-TransNet and HFS-TransNet) during the first three iterations. 
\cref{fig: case1_ablation_iter} presents the convergence of the errors $\text{RL}_{2}(p)$ and $\text{RL}_{\infty}(p)$ for both methods as the number of hidden-layer neurons increases.
Due to the data-driven initialization strategy, HFS-TransNet achieves significantly higher accuracy during the initial three iterations compared to FS-TransNet.
This highlights HFS-TransNet's capacity to achieve superior accuracy with a substantially reduced computational budget.
\begin{figure}[tbhp]
    \centering
	\includegraphics[width=.9\textwidth]{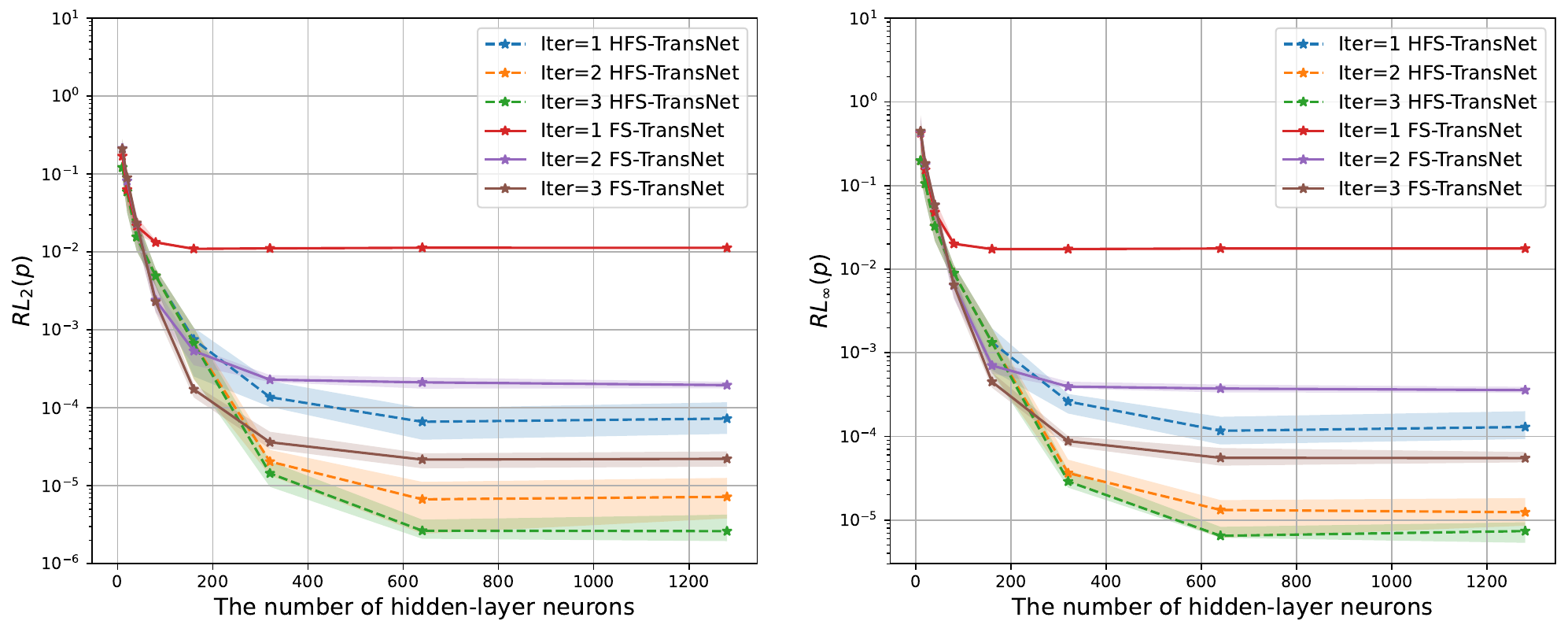}
	\vspace{-0.2cm}
	\caption{Comparison of the $\text{RL}_2(p)$ (left) and $\text{RL}_{\infty}(p)$ (right) errors between FS-TransNet and HFS-TransNet in Test Case 1 across varying iterations. 
    The confidence band represents the range between the 20th and 80th percentiles across 20 repeated runs.}
	\label{fig: case1_ablation_iter}
\end{figure}

\subsubsection{Ablation for the initialization and validation domains}
\label{sec: ne_case1_data}
The initialization and validation domains are two crucial design choices within the HFS-TransNet framework. 
Since the initialization domain determines the initial guess for the fixed-stress splitting iterations, its selection directly dictates the error in subsequent steps, thereby impacting overall computational efficiency. 
Meanwhile, the validation domain influences whether to update iterative weights, ultimately shaping the precision of the output.
\cref{tab: case1_ablation_init_val} summarizes the ablation results, where $\text{T}_\text{init}$ and $\text{T}_\text{val}$ denote the time intervals of the initialization domain and validation domain respectively.
As these domains vary, the resulting errors consistently stabilize within the range of $\mathcal{O}(10^{-5})$ to $\mathcal{O}(10^{-6})$. 
This insensitivity significantly reduces the cost of hyperparameter tuning.
\begin{table}[htbp]
    \footnotesize
    \caption{Ablation study of initialization and validation domains in Test Case 1.}
    \label{tab: case1_ablation_init_val}
    \begin{center}
    \begin{tabular}{llllllll}
    \hline
    $\text{T}_\text{init}$ & $\text{T}_\text{val}$
    & $\text{RL}_2(u)$ & $\text{RL}_\infty(u)$
    & $\text{RL}_2(v)$ & $\text{RL}_\infty(v)$
    & $\text{RL}_2(p)$ & $\text{RL}_\infty(p)$\\
    \hline
        & $ (0.10, 0.20]$ & 1.96e$-$5 & 2.93e$-$5 & 1.19e$-$5 & 2.11e$-$5 & 1.33e$-$5 & 2.01e$-$5\\
    $ [0, 0.10]$ & $ (0.10, 0.25]$ & 1.96e$-$5 & 2.89e$-$5 & 1.57e$-$5 & 2.42e$-$5 & 2.74e$-$5 & 3.94e$-$5\\
        & $ (0.10, 0.30]$ & 1.43e$-$5 & 2.27e$-$5 & 1.71e$-$5 & 2.36e$-$5 & 2.44e$-$6 & 6.86e$-$6\\ 
    \hline
        & $ (0.15, 0.25]$ & 1.16e$-$5 & 2.14e$-$5 & 1.10e$-$5 & 1.63e$-$5 & 2.17e$-$6 & 5.20e$-$6\\
    $ [0, 0.15]$ & $ (0.15, 0.30]$ & 1.61e$-$5 & 2.03e$-$5 & 2.29e$-$5 & 3.03e$-$5 & 2.69e$-$6 & 7.19e$-$6\\
        & $ (0.15, 0.35]$ & 1.56e$-$5 & 2.53e$-$5 & 1.53e$-$5 & 2.35e$-$5 & 2.45e$-$6 & 5.63e$-$6\\ 
    \hline
        & $ (0.20, 0.30]$ & 1.51e$-$5 & 2.63e$-$5 & 1.36e$-$5 & 2.28e$-$5 & 3.47e$-$6 & 7.95e$-$6\\
    $ [0, 0.20]$ & $ (0.20, 0.35]$ & 1.95e$-$5 & 2.83e$-$5 & 1.37e$-$5 & 2.08e$-$5 & 4.32e$-$6 & 9.59e$-$6\\
        & $ (0.20, 0.40]$ & 2.89e$-$5 & 3.05e$-$5 & 1.36e$-$5 & 2.28e$-$5 & 2.48e$-$6 & 7.23e$-$6\\ 
    \hline
    \end{tabular}
    \end{center}
\end{table}

\subsubsection{Comparison with FS-FEM and FS-PINN}
\label{sec: ne_case1_vs}
In this part, we compare HFS-TransNet with FS-FEM and FS-PINN.
For FS-FEM, a backward Euler scheme following \cite{ju2020parameter} is utilized for time discretization. The total number of spatial degrees of freedom is set to $\text{DOFs}=2482$, with a time step size of $\Delta t=0.01$.
For FS-PINN, nearly all hyperparameters follow those in \cite{cai2023combination}. For instance, two neural networks are employed to approximate the displacement $\boldsymbol{u}$ and pressure $p$ respectively, each consisting of 3 hidden layers with 15 neurons per layer. Additionally, the learning rate is set to $1\times10^{-4}$ with 50 epochs. The only modification is the fixed-stress iteration count, set to 10 for consistency across all three methods.
For HFS-TransNet, the parameter settings are detailed in \cref{tab: net_para_case1_case2}.

\cref{tab: case1_table} reports the performance comparisons of FS-FEM, FS-PINN, and HFS-TransNet for different values of Poisson's ratio $\nu$ and hydraulic conductivity $K$. 
First, we focus on computational efficiency. Under the same GPU acceleration, HFS-TransNet requires significantly less computational time (including both training and inference) compared to FS-PINN. 
Moreover, compared to the FS-FEM ($\text{DOFs}=2482, \Delta t=0.01$) implemented on a CPU, HFS-TransNet utilizes fewer degrees of freedom ($\text{DOFs}=1920$) and avoids the computational cost of temporal marching.
Second, we analyze the computational accuracy. As $\nu \rightarrow 0.5$ or $K \rightarrow 0$, the precision of all three methods undergoes a decline. Overall, both FS-FEM and HFS-TransNet significantly outperform FS-PINN. Specifically, FS-FEM exhibits superior performance in the displacement field, whereas HFS-TransNet achieves higher accuracy in the pressure field.
\begin{table}[htbp]
    \footnotesize
    \caption{Comparison between FS-FEM, FS-PINN, and HFS-TransNet in Test Case 1.}
    \label{tab: case1_table}
    \setlength{\tabcolsep}{3.5pt}
    \begin{center}
    \begin{tabular}{lllcllllll}
    \hline
    Method & $\nu$ & $K$ & Time(s)
    & $\text{RL}_2(u)$ & $\text{RL}_\infty(u)$
    & $\text{RL}_2(v)$ & $\text{RL}_\infty(v)$
    & $\text{RL}_2(p)$ & $\text{RL}_\infty(p)$\\
    \hline
    \multirow{4}*{FS-FEM} 
    & 0.300 & 1.00 & 42.9 & 1.10e$-$4 & 1.20e$-$4 & 1.10e$-$4 & 1.15e$-$4 & 2.00e$-$4 & 3.03e$-$4\\
    & 0.499 & 1.00 & 42.2 & 2.01e$-$6 & 3.25e$-$6 & 2.05e$-$6 & 3.80e$-$6 & 2.02e$-$4 & 3.07e$-$4\\ 
    & 0.300 & 0.01 & 41.8 & 3.81e$-$4 & 3.42e$-$4 & 3.85e$-$4 & 3.43e$-$4 & 1.60e$-$3 & 1.82e$-$3\\
    & 0.499 & 0.01 & 41.9 & 5.58e$-$6 & 5.40e$-$6 & 5.75e$-$6 & 5.55e$-$6 & 1.72e$-$3 & 2.06e$-$3\\
    \hline
    \multirow{4}*{FS-PINN} 
    & 0.300 & 1.00 & 597 & 1.33e$-$2 & 1.34e$-$2 & 1.09e$-$2 & 1.54e$-$2 & 4.00e$-$3 & 7.09e$-$3\\       
    & 0.499 & 1.00 & 647 & 1.40e$-$0 & 1.46e$-$0 & 1.56e$+$0 & 1.53e$+$0 & 5.00e$-$2 & 6.87e$-$2\\
    & 0.300 & 0.01 & 636 & 2.37e$-$2 & 4.00e$-$2 & 2.09e$-$2 & 2.88e$-$2 & 4.18e$-$3 & 6.16e$-$3\\       
    & 0.499 & 0.01 & 611 & 1.09e$-$0 & 1.41e$-$0 & 7.95e$-$1 & 9.20e$-$1 & 3.94e$-$1 & 4.30e$-$1\\
    \hline
    \multirow{4}*{HFS-TransNet} 
    & 0.300 & 1.00 & 1.67 & 1.64e$-$5 & 2.36e$-$5 & 1.01e$-$5 & 1.41e$-$5 & 2.03e$-$6 & 5.20e$-$6\\
    & 0.499 & 1.00 & 1.65 & 1.34e$-$5 & 2.02e$-$5 & 1.08e$-$5 & 1.77e$-$5 & 2.77e$-$6 & 5.56e$-$6\\
    & 0.300 & 0.01 & 0.90 & 2.16e$-$4 & 2.70e$-$4 & 3.89e$-$4 & 4.41e$-$4 & 8.51e$-$4 & 9.97e$-$4\\
    & 0.499 & 0.01 & 2.33 & 1.06e$-$5 & 1.51e$-$5 & 1.60e$-$5 & 2.58e$-$5 & 2.53e$-$6 & 5.53e$-$6\\
    \hline
    \end{tabular}
    \end{center}
\end{table}

\subsection{Test Case 2}
\label{sec: ne_case2}
We consider the Biot model on the space-time domain $\Omega \times [0, T] = [0,1]^2 \times [0, 1]$, with mixed boundary conditions applied as in \cite{cai2023combination, ju2020parameter, yi2017study}. 
In this experiment, the source $f$, body force $\boldsymbol{g}$, and initial conditions are chosen such that the exact solutions are given as follows: 
\begin{equation}
    \boldsymbol{u}=e^{-t}
    \begin{pmatrix}
    \sin(2\pi y)(\cos(2\pi x)-1)+\frac{1}{\mu + \lambda}\sin(\pi x)\sin(\pi y) \\
    \sin(2\pi x)(1 - \cos(2\pi y))+\frac{1}{\mu + \lambda}\sin(\pi x)\sin(\pi y)
    \end{pmatrix}, \quad
    p = e^{-t}\sin(\pi x)\sin(\pi y). \nonumber
\end{equation}
The boundaries are denoted as $\Gamma_1=\{(1, y) \mid 0\leq y \leq 1\}$, $\Gamma_2=\{(x, 0) \mid 0\leq x \leq 1\}$, $\Gamma_3=\{(0, y) \mid 0\leq y \leq 1\}$, $\Gamma_4=\{(x, 1) \mid 0\leq x \leq 1\}$, and the mixed boundary conditions are given as follows:
\begin{equation}
    \left\{
    \begin{aligned}
        &p = e^{-t}\sin(\pi x)\sin(\pi y) &\text{on } (\Gamma_1 \cup \Gamma_3) \times (0, T], \\
        &u_1 = e^{-t}\left( \sin(2\pi y)(\cos(2\pi x) - 1) + \frac{1}{\mu + \lambda} \sin(\pi x)\sin(\pi y) \right) &\text{on } (\Gamma_1 \cup \Gamma_3) \times (0, T], \\
        &u_2 = e^{-t}\left( \sin(2\pi x)(1 - \cos(2\pi y)) + \frac{1}{\mu + \lambda} \sin(\pi x)\sin(\pi y) \right) &\text{on } (\Gamma_1 \cup \Gamma_3) \times (0, T], \\
        &\sigma(\boldsymbol{u}) \boldsymbol{n} - \alpha p \boldsymbol{n} = \boldsymbol{h} &\text{on } (\Gamma_2 \cup \Gamma_4) \times (0, T], \\
        &K\nabla p \cdot \boldsymbol{n} = Ke^{-t}\left( \pi \cos(\pi x)\sin(\pi y)n_1 + \pi \sin(\pi x)\cos(\pi y)n_2 \right) &\text{on } (\Gamma_2 \cup \Gamma_4) \times (0, T], 
    \end{aligned} \nonumber
    \right.
\end{equation}
where the normal vector is denoted as $\boldsymbol{n}=(n_1, n_2)^\top$ and $\boldsymbol{h}$ is chosen such that the exact solution is satisfied. 
The model parameters in Test Case 2 are set to $E=1.0, \nu=0.3, 1/M=1.0, \alpha=1.0, K=1.0$.

\subsubsection{Ablation for the Poisson's ratio}
\label{sec: ne_case2_nu}
In this part, we test the robustness of three TransNet-based algorithms (TransNet, FS-TransNet, and HFS-TransNet) with respect to the Poisson's ratio $\nu$. 
The values of $\nu$ are set to $0.3, 0.5-10^{-3}$, and $0.5-10^{-6}$, and the results are reported in \cref{tab: case2_ablation_nu}..
For $\nu=0.3$ and $0.5-10^{-3}$, all three methods exhibit high accuracy in resolving the coupled displacement-pressure fields. 
However, as $\nu$ approaches the incompressible limit ($\nu=0.5-10^{-6}$), the accuracy of both TransNet and FS-TransNet deteriorates significantly. 
While TransNet loses accuracy in both fields and FS-TransNet struggles with the displacement component despite a reasonable pressure estimate, HFS-TransNet retains its accuracy for the coupled solutions.
While FS-TransNet requires a higher computational cost (including both training and inference) than conventional TransNet due to the fixed-stress splitting iteration, HFS-TransNet achieves a shorter runtime than FS-TransNet by incorporating the data-driven module.
\begin{table}[htbp]
    \footnotesize
    \caption{Ablation study of parameter $\nu$ in Test Case 2.}
    \label{tab: case2_ablation_nu}
    \setlength{\tabcolsep}{3.5pt}
    \begin{center}
    \begin{tabular}{lrclllllll}
    \hline
    $\nu$ & Method & Time(s)
    & $\text{RL}_2(u)$ & $\text{RL}_\infty(u)$
    & $\text{RL}_2(v)$ & $\text{RL}_\infty(v)$
    & $\text{RL}_2(p)$ & $\text{RL}_\infty(p)$\\
    \hline
            & TransNet     & 7.95 & 1.10e$-$4 & 1.41e$-$4 & 7.01e$-$5 & 9.44e$-$5 & 2.81e$-$5 & 8.04e$-$5\\
    $0.3$   & FS-TransNet  & 83.3 & 3.76e$-$4 & 3.35e$-$4 & 1.49e$-$4 & 2.56e$-$4 & 2.39e$-$4 & 2.92e$-$4\\
            & HFS-TransNet & 17.5 & 1.27e$-$3 & 2.23e$-$3 & 1.17e$-$3 & 1.44e$-$3 & 6.75e$-$3 & 4.73e$-$3\\
    \hline
                    & TransNet     & 7.88 & 1.95e$-$3 & 4.36e$-$3 & 1.90e$-$3 & 3.90e$-$3 & 2.51e$-$4 & 5.63e$-$4\\ 
    $0.5-10^{-3}$   & FS-TransNet  & 84.6 & 2.92e$-$3 & 5.04e$-$3 & 2.23e$-$3 & 5.94e$-$3 & 2.43e$-$5 & 3.90e$-$5\\
                    & HFS-TransNet & 26.3 & 1.37e$-$3 & 2.28e$-$3 & 1.47e$-$3 & 1.94e$-$3 & 2.00e$-$5 & 2.61e$-$5\\
    \hline
                    & TransNet     & 7.84 & 7.44e$-$1 & 1.50e$-$0 & 7.96e$-$1 & 1.09e$-$0 & 5.69e$-$2 & 1.89e$-$1\\
    $0.5-10^{-6}$   & FS-TransNet  & 85.5 & 4.10e$-$1 & 5.76e$-$1 & 4.79e$-$1 & 4.15e$-$1 & 2.74e$-$5 & 3.38e$-$5\\
                    & HFS-TransNet & 8.65 & 2.35e$-$2 & 1.89e$-$2 & 2.48e$-$2 & 3.00e$-$2 & 6.75e$-$3 & 4.73e$-$3\\
    \hline
    \end{tabular}
    \end{center}
\end{table}

\subsubsection{Ablation for the hydraulic conductivity}
\label{sec: ne_case2_k}
To evaluate the impact of hydraulic conductivity, we vary $K$ among $1.0$, $10^{-3}$, and $10^{-6}$, with the results reported in \cref{tab: case2_ablation_K}.
Across the range of $K$ values, these three TransNet-based methods consistently resolve the coupled displacement and pressure fields.
\begin{table}[htbp]
    \footnotesize
    \caption{Ablation study of parameter $K$ in Test Case 2.}
    \label{tab: case2_ablation_K}
    \setlength{\tabcolsep}{3.5pt}
    \begin{center}
    \begin{tabular}{lrclllllll}
    \hline
    $K$  & Method & Time(s)
    & $\text{RL}_2(u)$ & $\text{RL}_\infty(u)$
    & $\text{RL}_2(v)$ & $\text{RL}_\infty(v)$
    & $\text{RL}_2(p)$ & $\text{RL}_\infty(p)$\\
    \hline
            & TransNet     & 7.95 & 1.10e$-$4 & 1.41e$-$4 & 7.01e$-$5 & 9.44e$-$5 & 2.81e$-$5 & 8.04e$-$5\\
    $1.0$   & FS-TransNet  & 83.3 & 3.76e$-$4 & 3.35e$-$4 & 1.49e$-$4 & 2.56e$-$4 & 2.39e$-$4 & 2.92e$-$4\\
            & HFS-TransNet & 17.5 & 1.27e$-$3 & 2.23e$-$3 & 1.17e$-$3 & 1.44e$-$3 & 6.75e$-$3 & 4.73e$-$3\\
    \hline
                & TransNet     & 7.87 & 7.66e$-$5 & 1.03e$-$4 & 5.93e$-$5 & 7.96e$-$5 & 1.21e$-$4 & 3.62e$-$4\\ 
    $10^{-3}$   & FS-TransNet  & 89.0 & 2.88e$-$4 & 4.86e$-$4 & 1.53e$-$4 & 2.90e$-$4 & 7.18e$-$4 & 3.00e$-$3\\
                & HFS-TransNet & 17.4 & 1.27e$-$3 & 2.23e$-$3 & 1.17e$-$3 & 1.44e$-$3 & 6.75e$-$3 & 4.73e$-$3\\
    \hline
                & TransNet     & 8.10 & 9.03e$-$5 & 1.12e$-$4 & 6.71e$-$5 & 7.20e$-$5 & 1.34e$-$4 & 3.21e$-$4\\
    $10^{-6}$   & FS-TransNet  & 84.4 & 4.93e$-$4 & 3.83e$-$4 & 2.14e$-$4 & 2.54e$-$4 & 7.53e$-$4 & 3.26e$-$3\\
                & HFS-TransNet & 18.2 & 1.27e$-$3 & 2.23e$-$3 & 1.17e$-$3 & 1.44e$-$3 & 6.75e$-$3 & 4.73e$-$3\\
    \hline
    \end{tabular}
    \end{center}
\end{table}

\subsection{Barry--Mercer problem}
\label{sec: ne_bm}
The Barry--Mercer problem is a classical benchmark problem of porous media involving fluid injection \cite{barry1999exact, cai2023combination, haghighat2022physics, phillips2009overcoming}. It is defined on a two-dimensional square domain $\Omega=[0,1]^2$ with a point source $f$ located at $(x_0, y_0)=(0.25, 0.25)$. In this experiment, the source $f$ and body force $\boldsymbol{g}$ are defined as $f = 2\delta(x-x_0)\delta(y-y_0)sin(t)$ and $\boldsymbol{g}=\boldsymbol{0}$, where $\delta$ denotes the Dirac function. Following \cite{cai2023combination, haghighat2022physics}, a Gaussian distribution is used to approximate the Dirac function $\delta(x) \approx e^{-(x/c_g)^2}/c_g\sqrt{\pi}$ with $c_g = 0.04$. 
\cref{fig: BarryMercer_Boundary} depicts the boundary conditions, and the initial conditions are given as $\boldsymbol{u}=\boldsymbol{0}$, $p=0$ in $\Omega \times \{0\}$.
The model parameters for the Barry--Mercer problem are set to $E=14/15, \nu=1/6, 1/M=0, \alpha=1.0, K=1.0$.
A conventional FS-FEM method with $\Delta t=\pi/200$ is used to generate the high-quality reference solution $u_\mathrm{ref}$.

\cref{fig: BarryMercer_result} compares the displacement and pressure fields $(\boldsymbol{u}, p)$ obtained by FS-FEM and the proposed HFS-TransNet at $t=\pi/2$. 
Furthermore, \cref{fig: BarryMercer_error} presents the pointwise absolute errors between these two solutions.
It is evident that HFS-TransNet effectively captures the complex coupled physics in the vicinity of the source point in the Barry--Mercer problem.
\begin{figure}[tbhp]
    \centering
    \vspace{-0.2cm}
    \subfloat[Boundary conditions]{\label{fig: BarryMercer_Boundary}
    \includegraphics[width=.33\textwidth]{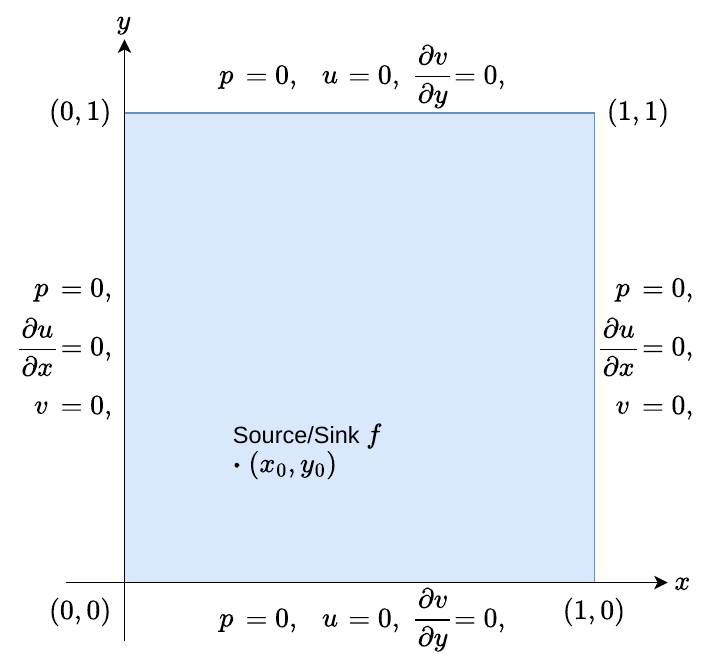}}
    \subfloat[FS-FEM (left) and HFS-TransNet (right) solutions]{\label{fig: BarryMercer_result}
    \includegraphics[width=.65\textwidth]{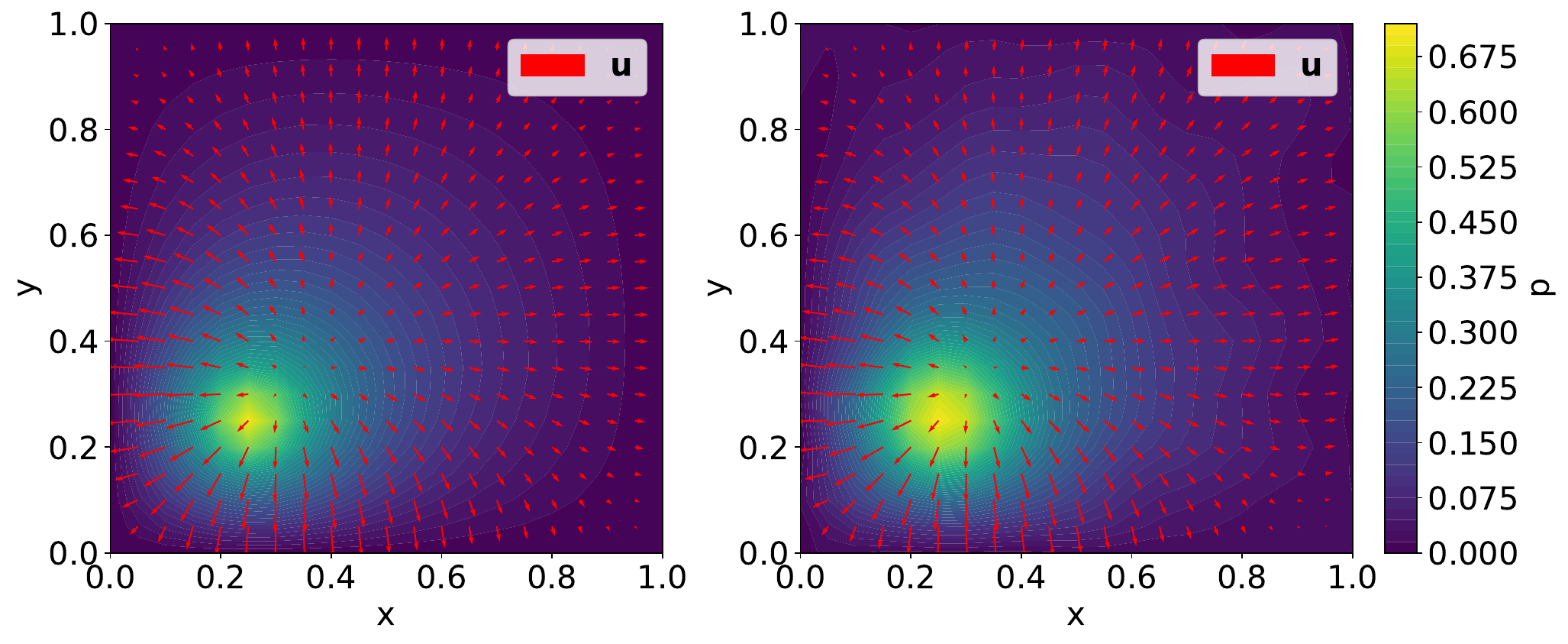}}
	\vspace{-0.2cm}
	\caption{Barry--Mercer problem.}
	\label{fig: BarryMercer}
\end{figure}
\begin{figure}[tbhp]
    \centering
	\includegraphics[width=.98\textwidth]{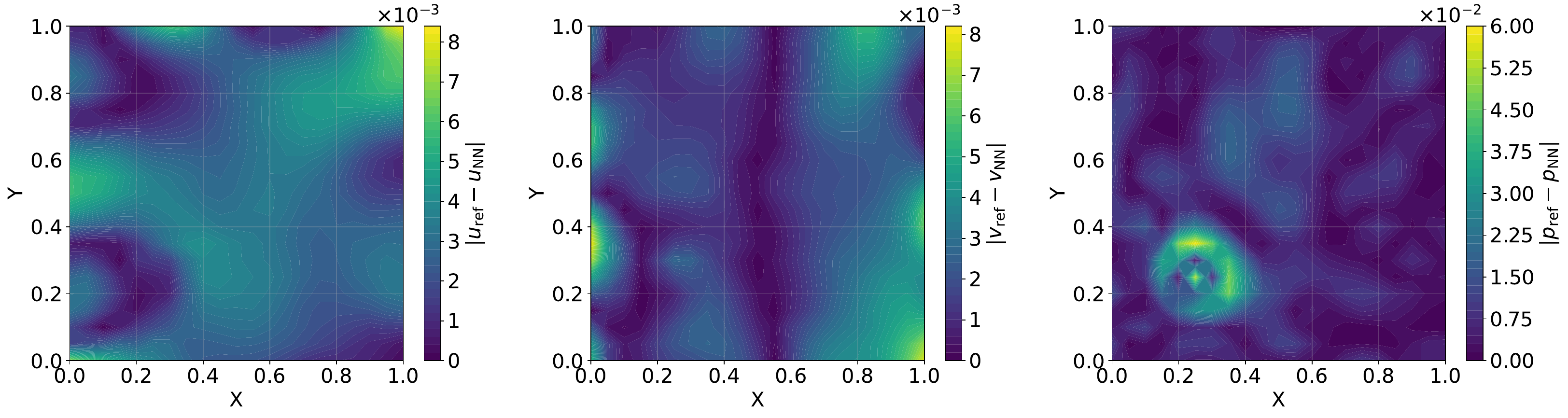}
	\vspace{-0.4cm}
	\caption{Pointwise absolute error for $u$ (left), $v$ (middle), and $p$ (right) between the FS-FEM and HFS-TransNet solutions for the Barry--Mercer problem at $t=\pi/2$.}
	\label{fig: BarryMercer_error}
\end{figure}

\subsection{Mandel problem}
\label{sec: ne_md}
The Mandel problem is a classical benchmark problem known for the Mandel--Cryer effect \cite{abousleiman1996mandel, mandel1953consolidation, phillips2007coupling}. 
This problem considers a saturated poroelastic plate in the domain $\Omega=[-a, a]\times[-b, b]$, constrained between two rigid plates. As illustrated on the left side of \cref{fig: Mandel_Boundary}, the top plate exerts a constant downward force of $2F$, while the bottom plate applies an equal upward force. 
Due to the quarter symmetry, we simplify the computational domain to $[0,a]\times[0,b]$, with the boundary conditions as specified on the right side of \cref{fig: Mandel_Boundary}. 
The source and the body force are set to $f=0$, $\boldsymbol{g}=\boldsymbol{0}$ respectively, while the boundary condition $V$ and the initial conditions follow the analytical series solutions provided in \cite{abousleiman1996mandel, phillips2007coupling}, 
\begin{align}
    \begin{split}
    u=&\left[\frac{F\nu}{2\mu a}-\frac{F\nu_u}{\mu a}\sum_{i = 1}^{\infty}\frac{\sin\beta_i\cos\beta_i}{\beta_i - \sin\beta_i\cos\beta_i}\exp\left(-\beta_i^2 ct/a^2 \right)\right]x \\ 
    &+\frac{F}{\mu}\sum_{i = 1}^{\infty}\frac{\cos\beta_i}{\beta_i - \sin\beta_i\cos\beta_i}\sin\frac{\beta_i x}{a}\exp\left(-\beta_i^2 ct/a^2\right), 
    \end{split}\label{equ: Mandel_exact_u} \\
    v=&\left[-\frac{F(1 - \nu)}{2\mu a}+\frac{F(1 - \nu_u)}{\mu a}\sum_{i = 1}^{\infty}\frac{\sin\beta_i\cos\beta_i}{\beta_i - \sin\beta_i\cos\beta_i}\exp\left(-\beta_i^2 ct/a^2 \right)\right]y, \label{equ: Mandel_exact_v} \\
    p=&\frac{2FB(1 + \nu_u)}{3a} \sum_{i = 1}^{\infty}\frac{\sin\beta_i}{\beta_i - \sin\beta_i\cos\beta_i}\left(\cos\frac{\beta_i x}{a}-\cos\beta_i\right)\exp\left(-\beta_i^2 ct/a^2 \right). \label{equ: Mandel_exact_p}
\end{align}
In the above equations, $\beta_i$ satisfies $\tan\beta_i = (1-\nu)\beta_i/(\nu_u-\nu), i=1, 2, \dots, \infty$, 
and the other parameters for the Mandel problem are listed in \cref{tab: symbols Mandel}.
For the following analysis, the analytical solutions in \cref{equ: Mandel_exact_u}--\cref{equ: Mandel_exact_p} are truncated to 30 terms to serve as the reference. 
\begin{figure}[tbhp]
    \centering
	\includegraphics[width=.95\textwidth]{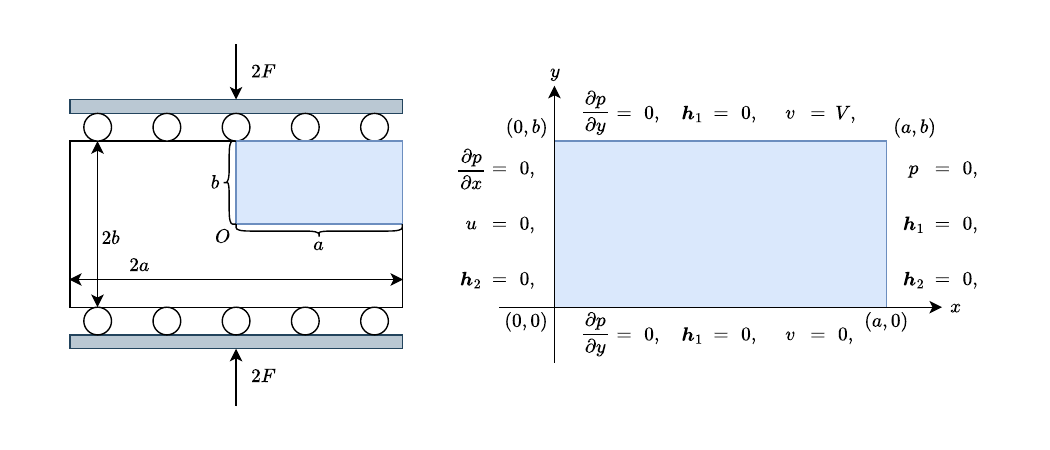}
	\vspace{-0.6cm}
	\caption{Physical description of the Mandel problem (left), and the corresponding computational domain and boundary conditions (right).}
	\label{fig: Mandel_Boundary}
\end{figure}
\begin{table}[htbp]
    \footnotesize
    \caption{Physical parameters for the Mandel problem.}
    \label{tab: symbols Mandel}
    \begin{center}  
    \begin{tabular}{lll}
    \hline
    Symbol & Physical meaning & Value \\
    \hline
    $a$      & Dimension in $x$           & $1.00 \, \text{m}$ \\
    $b$      & Dimension in $y$           & $1.00 \, \text{m}$ \\
    $E$      & Young's modulus            & $5.94 \times 10^9 \, \text{Pa}$ \\
    $\nu$    & Poisson's ratio            & $0.20$ \\
    $\alpha$ & Biot-Willis constant       & $1.00$ \\
    $M$      & Biot modulus               & $1.65 \times 10^{10} \, \text{Pa}$ \\
    $K$      & Hydraulic conductivity     & $9.86 \times 10^{-11} \, \text{m}^2 \cdot \text{Pa}^{-1} \cdot \text{s}^{-1}$ \\
    $F$      & Applied load               & $6.00 \times 10^8 \, \text{N} \cdot \text{m}^{-1}$ \\
    $B$      & Skempton's coefficient     & $0.83$ \\
    $\nu_u$  & Undrained Poisson's ratio  & $0.44$ \\
    $c$      & Diffusivity coefficient    & $0.46 \, \text{m}^2 \cdot \text{s}^{-1}$ \\
    \hline
    \end{tabular}
    \end{center}
\end{table}

Notably, since the pressure $p$ satisfies the governing equation \cite{phillips2007coupling},
\begin{equation}
    \frac{\partial p}{\partial t}-c\frac{\partial^2 p}{\partial x^2}=\frac{2cFB(1 + \nu_u)}{3a^3}\sum_{i = 1}^{\infty}\frac{\beta_i^2 \sin\beta_i\cos\beta_i}{\beta_i - \sin\beta_i\cos\beta_i}\exp(-\beta_i^2 ct/a^2), \nonumber
\end{equation}
the initial $p$ can be obtained via the conventional TransNet method. Subsequently, the initial values of $u$ and $v$ are determined through the physics-informed TransNet, constrained by \cref{equ: fluid mass conservation}--\cref{equ: solid force equilibrium}.
Additionally, the validation stage is bypassed, thereby mitigating the reliance of the data-driven module on observed data.

\cref{fig: Mandel_result} compares the reference and HFS-TransNet solutions at $t=0.001, 0.01, 0.1$, and $1$. 
At each time, results are plotted in the same color, with solid lines denoting the reference and circles representing the HFS-TransNet results.
These results demonstrate that HFS-TransNet accurately captures the temporal evolution of both pressure and displacement fields, showing excellent agreement with the Mandel--Cryer effect.
\begin{figure}[tbhp]
    \centering
	\includegraphics[width=.95\textwidth]{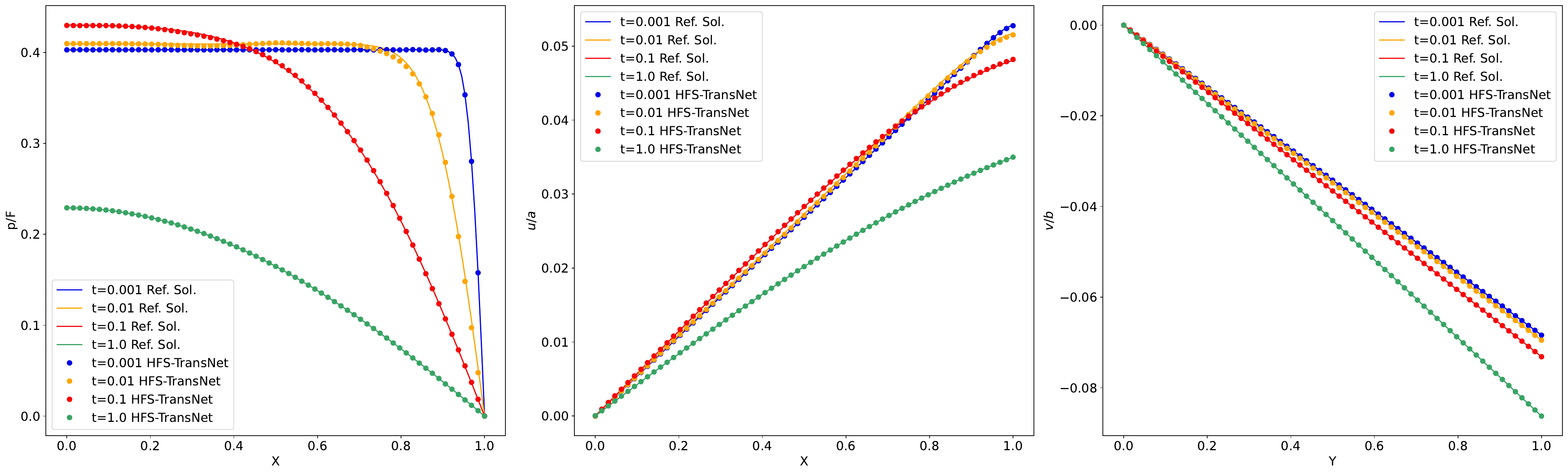}
	\vspace{-0.4cm}
	\caption{Comparison of the pressure $p$ at $y=b$ (left), x-displacement $u$ at $y=b$ (middle), and y-displacement $v$ at $x=a$ (right) obtained by the reference and HFS-TransNet solutions for the Mandel problem.}
	\label{fig: Mandel_result}
\end{figure}

\subsection{Brain edema simulations}
\label{sec: ne_be}
In this section, we explore the application of HFS-TransNet in simulating brain swelling induced by traumatic brain injury (TBI) \cite{chen2023physical, ju2020parameter, wang2025mixed}.
Following \cite{ju2020parameter}, we consider a 2D cross-section of the human brain, as shown in \cref{fig: Brain}.
\begin{figure}[tbhp]
    \centering
    \vspace{-0.2cm}
    \subfloat[A 2D MRI human brain cross-section]{\label{fig: Brain}
    \includegraphics[width=.68\textwidth]{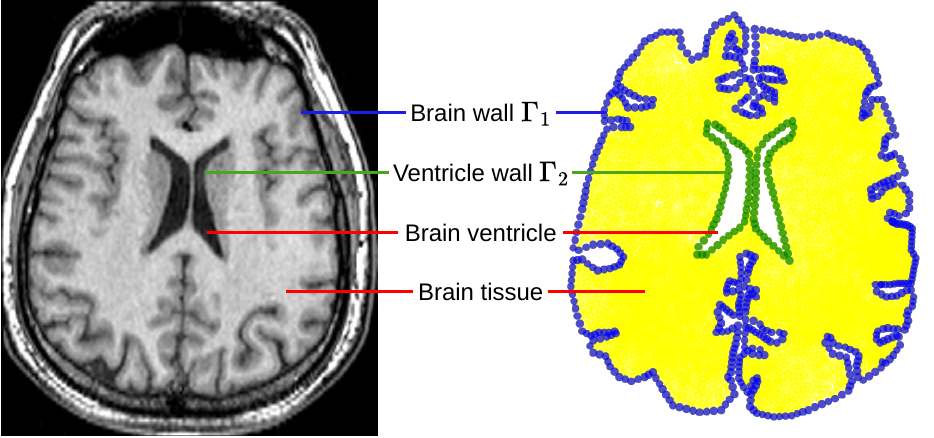}}
    \subfloat[Mesh]{\label{fig: Brain_grid}
    \includegraphics[width=.30\textwidth]{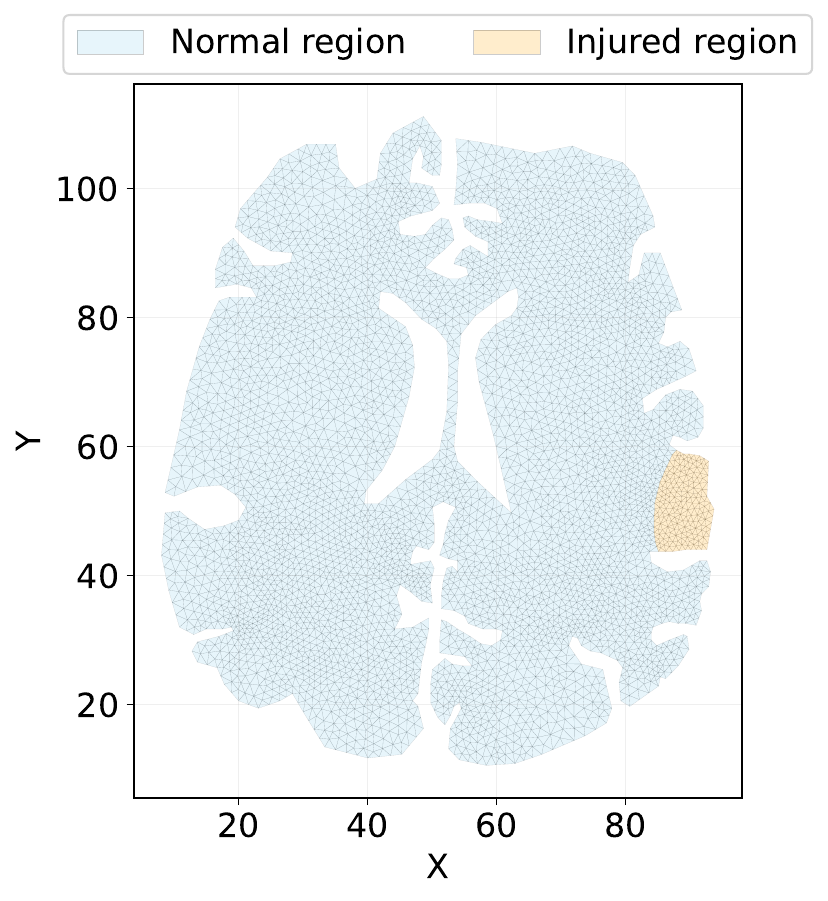}}
	\vspace{-0.2cm}
	\caption{Brain edema simulations.}
	\label{fig: Brain edema simulations}
\end{figure}

In the simulation, body forces are neglected, i.e., $\boldsymbol{g}=\boldsymbol{0}$, and the boundary conditions are prescribed as follows:
\begin{equation}
    \left\{
    \begin{aligned}
        &\boldsymbol{u} = \boldsymbol{0} &\text{on } \Gamma_1 \times (0, T], \\
        &K\nabla p \cdot \boldsymbol{n} = c_b(p_{SAS}-p) &\text{on } \Gamma_1 \times (0, T], \\
        &\sigma(\boldsymbol{u}) \boldsymbol{n} - \alpha p \boldsymbol{n} = -p\boldsymbol{n} &\text{on } \Gamma_2 \times (0, T], \\
        &p=1100\, \text{Pa} &\text{on } \Gamma_2 \times (0, T], 
    \end{aligned} \nonumber
    \right.
\end{equation}
where $c_b$ is the conductance value and $p_{SAS}$ is the pressure in the subarachnoid space (SAS). Furthermore, the parameters in \cref{equ: fluid mass conservation} are defined as $1/M=c_0$ and $K=\kappa/\mu_f$, with their specific values provided in \cref{tab: symbols Brain edema}.
\begin{table}[htbp]
    \footnotesize
    \caption{Physical parameters for brain edema simulations.}
    \label{tab: symbols Brain edema}
    \begin{center}
    \begin{tabular}{lll}
    \hline
    Symbol & Physical meaning & Value \\
    \hline
    $E$      & Young's modulus            & $9010 \, \text{Pa}$ \\
    $\nu$    & Poisson's ratio            & $0.35$ \\
    $\alpha$ & Biot-Willis constant       & $1.00$ \\
    $c_0$    & Specific storage term      & $4.50 \times 10^{-7} \, \text{Pa}^{-1}$ \\
    $\kappa$ & Permeability of the brain  & $1.40 \times 10^{-9} \, \text{mm}^2$ \\
    $\mu_f$  & Fluid viscosity            & $1.48 \times 10^{-5} \, \text{Pa} \cdot \text{min}$ \\
    $c_b$    & Conductance                & $3.00 \times 10^{-5} \, \text{mm} \cdot \text{min}^{-1} \cdot \text{Pa}^{-1}$ \\
    $p_{\text{SAS}}$ & Pressure of subarachnoid space & $1070 \, \text{Pa}$ \\
    \hline
    \end{tabular}
    \end{center}
\end{table}

The reference solution is generated using the FS-FEM on the mesh shown in \cref{fig: Brain_grid} to evaluate the effectiveness of HFS-TransNet. 
The FS-FEM involves $\text{DOFs} = 23,814$ and $\Delta t = 0.01\,\text{min}$, with the fixed-stress iteration count set to $10$.
In the normal region, a state of homeostasis exists between the absorption and discharge of cerebrospinal fluid (CSF), such that $f=0$. Post-TBI, this dynamic equilibrium is disrupted. Therefore, the absorption rate in the injured region is set to $f = 9 \times 10^{-3}\,\text{min}^{-1}$, consistent with \cite{ju2020parameter}.
For HFS-TransNet, following \cite{lu2025multiple}, two separate networks are used to represent the normal $\Omega^n$ and injured $\Omega^i$ regions respectively. Furthermore, a new interface error term is introduced to enforce the continuity conditions across the regional boundaries.

\cref{fig: Brain_HFSTransNet} compare the pressure $p$ and displacement $\boldsymbol{u}$ distributions at $t = 1.5\,\text{min}$, obtained via FS-FEM ($\text{DOFs} = 23814$ with $\Delta t=0.01\,\text{min}$, runtime $1440.9\,\text{s}$ on CPU) and HFS-TransNet ($\text{DOFs} = 12288$, runtime $14.3\,\text{s}$ on GPU) respectively.
Due to the presence of near-zero displacement values, only the absolute error is reported for the displacement field.
The results demonstrate that the proposed HFS-TransNet maintains the precision of conventional FS-FEM while using fewer DOFs and avoiding temporal marching.
\begin{figure}[tbhp]
    \centering
    \includegraphics[width=.95\textwidth]{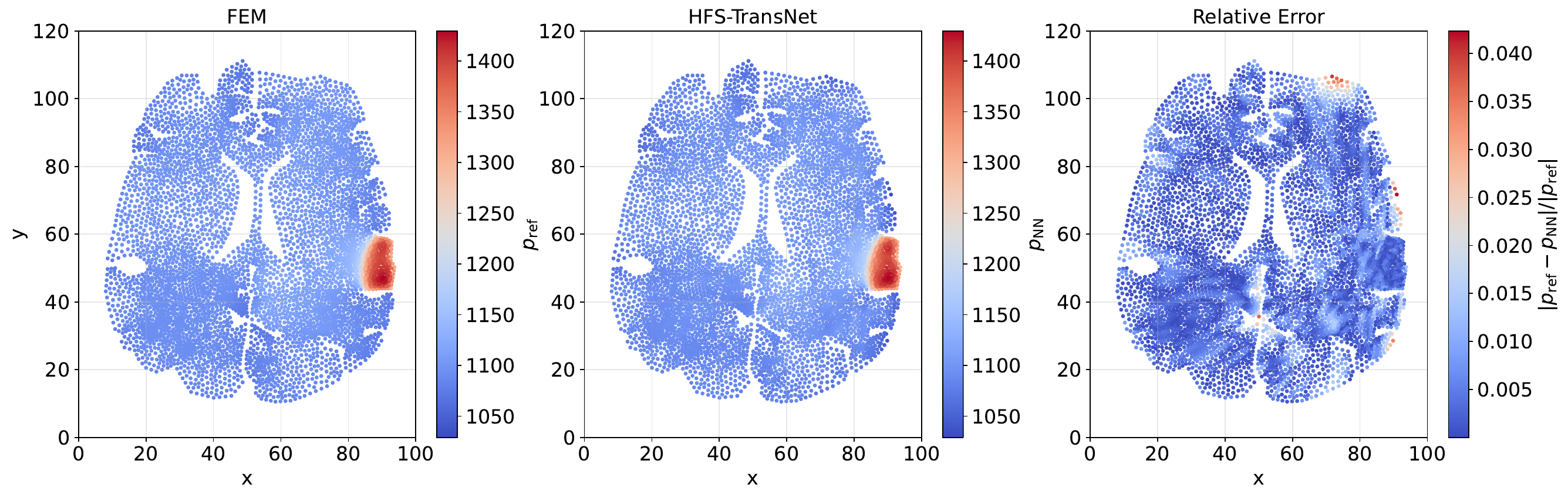}
    \vspace{0.2cm}
    \includegraphics[width=.95\textwidth]{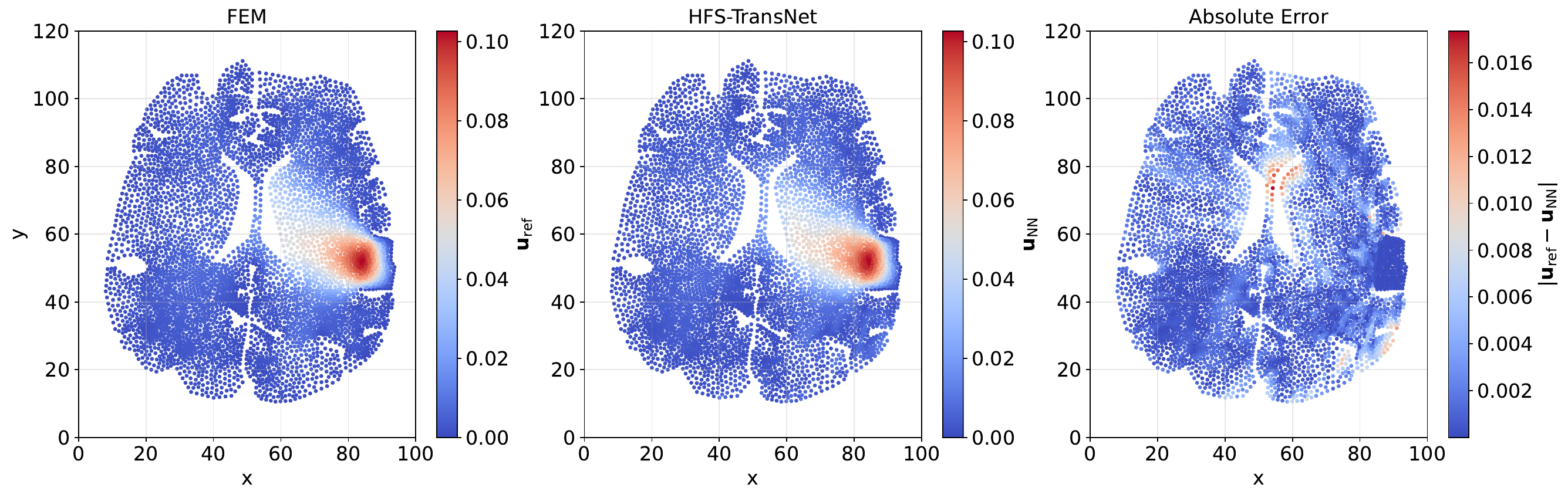}
    \vspace{-0.4cm} 
    \caption{Comparison of pressure (top) and displacement (bottom) at $t = 1.5\,\text{min}$ obtained by FS-FEM (left) and HFS-TransNet (middle), along with their error (right).}
    \label{fig: Brain_HFSTransNet}
\end{figure}

\section{Conclusions}
\label{sec: ne_con}
In this work, we present a hybrid fixed-stress transferable neural network (HFS-TransNet) for solving the Biot model, which governs the coupled interaction between fluids and poroelastic media. 
Approximating the observed data in the initialization domain via the purely data-driven TransNet yields the starting solutions for the subsequent physics-informed module.
These initial solutions are iteratively refined through the fixed-stress splitting-based TransNet.
After each iteration, a greedy update strategy defined on the validation domain is employed to prevent model degradation, and the loop is terminated once the relative tolerance stopping criterion is satisfied.

We establish an error estimate for the physics-informed module embedded with the fixed-stress splitting iteration, which is validated through numerical experiments.
Additionally, these experiments demonstrate that the same hidden-layer pretraining mechanism is transferable across varying physical properties (Poisson's ratios and hydraulic conductivities), distinct boundary conditions (Dirichlet and mixed boundary conditions), diverse geometric domains (regular and irregular), and different time intervals (initialization and validation), thereby substantially reducing the cost of parameter tuning.

Successfully applied to brain edema simulations, HFS-TransNet bridges traditional decoupled iterative algorithms and hybrid scientific machine learning, thus offering a promising tool for complex Biot systems.
Future work will focus on establishing rigorous error estimates for the data-driven module and extending the proposed method to more challenging scenarios, such as 3D brain flow simulations \cite{cai2023iteratively} and multiphase poroelasticity \cite{haghighat2022physics}.

\appendix
\section{Proof of Proposition 3.3}
\label{app: fsTransNet}
\begin{proof}
    First, we subtract \cref{equ: solid force equilibrium} from the $n$-th iteration of \cref{equ: Step 2 PDE NN} and take the derivative with respect to $t$ to obtain
    \begin{equation}
        -\nabla \cdot \sigma(\partial_t e_{\boldsymbol{u}}^n) = -\alpha \nabla \partial_t e_p^n + \alpha \nabla\left(\partial_t p^n - \partial_t p_{\mathrm{NN}}^n\right). \label{equ: thm 2 error 2 partial t}
    \end{equation}
    Multiplying \cref{equ: thm 2 error 2 partial t} by $\partial_t e_{\boldsymbol{u}}^n$ and applying the divergence theorem, we obtain
    \begin{align}
        \begin{split}
        &2\mu \int_0^T \int_{\Omega} \varepsilon\left(\partial_t e_{\boldsymbol{u}}^n\right) : \varepsilon\left(\partial_t e_{\boldsymbol{u}}^n\right) dxdt + \lambda \int_0^T \int_{\Omega} \left(\nabla \cdot \partial_t e_{\boldsymbol{u}}^n\right)^2 dxdt \\
        &= \alpha \int_0^T \int_{\Omega} \left(\partial_t e_p^n\right) \left(\nabla \cdot \partial_t e_{\boldsymbol{u}}^n\right) dxdt + \alpha \int_0^T \int_{\Omega} \left(\nabla\left(\partial_t p^n - \partial_t p_{\mathrm{NN}}^n\right)\right) \cdot \left(\partial_t e_{\boldsymbol{u}}^n\right) dxdt. 
        \end{split}\label{equ: thm 2 int 2}
    \end{align}
    The LHS of \cref{equ: thm 2 int 2} satisfies 
    \begin{equation}
        \text{LHS} \geq \left(\frac{2\mu}{d}+\lambda\right) \int_0^T \int_{\Omega} \left(\nabla \cdot \partial_t e_{\boldsymbol{u}}^n\right)^2 dxdt. \label{equ: thm 2 int 2 LHS}
    \end{equation}
    Using the Cauchy--Schwarz inequality and the Poincaré inequality, for any $\epsilon_1>0$, there exists a $C_p>0$ such that the RHS of \cref{equ: thm 2 int 2} satisfies 
    \begin{align}
        \begin{split}
        \text{RHS} \leq &\frac{\alpha^2}{2\left( \frac{2\mu}{d} + \lambda \right)} \int_0^T \int_{\Omega} \left(\partial_t e_p^n\right)^2 dxdt + \frac{1}{2}\left( \frac{2\mu}{d} + \lambda \right) \int_0^T \int_{\Omega} \left(\nabla \cdot \partial_t e_{\boldsymbol{u}}^n\right)^2 dxdt \\ 
        &+ \frac{\alpha^2}{4\epsilon_1} \int_0^T \int_{\Omega} \left(\nabla \partial_t p^n - \nabla \partial_t p_{\mathrm{NN}}^n\right)^2 dxdt + C_{p} \epsilon_1 \int_0^T \int_{\Omega} \left(\nabla \cdot \partial_t e_{\boldsymbol{u}}^n\right)^2 dxdt. 
        \end{split}\label{equ: thm 2 int 2 RHS}
    \end{align}
    Combining \cref{equ: thm 2 int 2 LHS} and \cref{equ: thm 2 int 2 RHS}, we obtain
    \begin{align}
        \begin{split}
        \left(1-\frac{2C_p \epsilon_1}{\frac{2\mu}{d}+\lambda}\right)\text{LHS} \leq &\frac{\alpha^2}{\frac{2\mu}{d} + \lambda} \int_0^T \int_{\Omega} \left(\partial_t e_p^n\right)^2 dxdt \\
        &+ \frac{\alpha^2}{2\epsilon_1} \int_0^T \int_{\Omega} \left(\nabla \partial_t p^n - \nabla \partial_t p_{\mathrm{NN}}^n\right)^2 dxdt.
        \end{split}
    \end{align}
    Define $C$ such that $\left(1+C\epsilon_1\right)\left(1-\frac{2C_p \epsilon_1}{\frac{2\mu}{d}+\lambda}\right)=1$. 
    Then, by \Cref{equ: ass}, we obtain
    \begin{align}
        \begin{split}
        &2\mu \int_0^T \int_{\Omega} \varepsilon(\partial_t e_{\boldsymbol{u}}^n) : \varepsilon(\partial_t e_{\boldsymbol{u}}^n) dxdt + \lambda \int_0^T \int_{\Omega} \left(\nabla \cdot \partial_t e_{\boldsymbol{u}}^n\right)^2 dxdt \\
        &\leq (1 + C \epsilon_1) \left[ \frac{\alpha^2}{\frac{2\mu}{d} + \lambda} \int_0^T \int_{\Omega} \left(\partial_t e_p^n\right)^2 dxdt + \frac{\alpha^2}{2\epsilon_1} \delta^2 \right]. 
        \end{split}\label{equ: thm 2 ineq 1}
    \end{align}
    
    Similar to \cref{equ: thm 2 error 2 partial t},  we subtract \cref{equ: solid force equilibrium} from the $(n+1)$-th iteration of \cref{equ: Step 2 PDE NN} and take the derivative with respect to $t$ to obtain
    \begin{equation}
        -\nabla \cdot \sigma(\partial_t e_{\boldsymbol{u}}^{n+1}) = -\alpha \nabla \partial_t e_p^{n+1} + \alpha \nabla\left(\partial_t p^{n+1} - \partial_t p_{\mathrm{NN}}^{n+1}\right). \label{equ: thm 2 error 2 partial t n+1}
    \end{equation}
    Subtracting \cref{equ: thm 2 error 2 partial t} from \cref{equ: thm 2 error 2 partial t n+1} yields
    \begin{equation}
        -\nabla \cdot \sigma(\partial_t e_{\boldsymbol{u}}^{n+1} - \partial_t e_{\boldsymbol{u}}^n) = -\alpha \nabla (\partial_t e_p^{n+1} - \partial_t e_p^n) + \alpha \nabla\left(\left(\partial_t p^{n+1} - \partial_t p_{\mathrm{NN}}^{n+1}\right) - \left(\partial_t p^n - \partial_t p_{\mathrm{NN}}^n\right)\right).
    \end{equation}
    Similar to \cref{equ: thm 2 ineq 1}, we obtain
    \begin{align}
        \begin{split}
        &2\mu \int_0^T \int_{\Omega} \varepsilon\left(\partial_t e_{\boldsymbol{u}}^{n+1} - \partial_t e_{\boldsymbol{u}}^n\right) : \varepsilon\left(\partial_t e_{\boldsymbol{u}}^{n+1} - \partial_t e_{\boldsymbol{u}}^n\right) dxdt + \lambda \int_0^T \int_{\Omega} \left(\nabla \cdot \left(\partial_t e_{\boldsymbol{u}}^{n+1} - \partial_t e_{\boldsymbol{u}}^n\right)\right)^2 dxdt \\
        &\leq (1 + C \epsilon_1) \left[ \frac{\alpha^2}{\frac{2\mu}{d} + \lambda} \int_0^T \int_{\Omega} \left(\partial_t e_p^{n+1} - \partial_t e_p^n\right)^2 dxdt + \frac{2\alpha^2}{\epsilon_1} \delta^2 \right]. 
        \end{split}\label{equ: thm 2 ineq 2}
    \end{align}

    Next, subtracting \cref{equ: fluid mass conservation} from the $(n+1)$-th iteration of \cref{equ: Step 1 PDE NN} yields
    \begin{align}
        \begin{split}
        &\left( \frac{1}{M} + \beta_{FS} \right) \partial_t e_p^{n+1} - \nabla \cdot K \nabla e_p^{n+1} \\
        &= -\alpha \nabla \cdot \partial_t e_{\boldsymbol{u}}^n
        + \beta_{FS} \partial_t e_p^n + \alpha \nabla \cdot \left(\partial_t \boldsymbol{u}^n - \partial_t \boldsymbol{u}_{\mathrm{NN}}^n\right) - \beta_{FS} \left(\partial_t p^n - \partial_t p_{\mathrm{NN}}^n\right).  
        \end{split}\label{equ: thm 2 error 1} 
    \end{align}
    Multiplying \cref{equ: thm 2 error 1} by $\partial_t e_p^{n+1}$ and \cref{equ: thm 2 error 2 partial t n+1} by $\partial_t e_{\boldsymbol{u}}^n$ respectively, integrating the results, and summing them, we obtain
    \begin{align}
        \begin{split}
        &\frac{1}{M} \int_0^T \int_{\Omega} \left(\partial_t e_p^{n+1}\right)^2 dxdt + \frac{K}{2} \int_{\Omega} \left(\nabla e_p^{n+1}(T)\right)^2 dx + 2\mu \int_0^T \int_{\Omega} \varepsilon\left(\partial_t e_{\boldsymbol{u}}^{n+1}\right) : \varepsilon\left(\partial_t e_{\boldsymbol{u}}^n\right) dxdt \\
        &\quad + \lambda \int_0^T \int_{\Omega} \left(\nabla \cdot \partial_t e_{\boldsymbol{u}}^{n+1}\right)\left(\nabla \cdot \partial_t e_{\boldsymbol{u}}^n\right) dxdt + \beta_{FS} \int_0^T \int_{\Omega} \left(\partial_t e_p^{n+1}\right)\left(\partial_t e_p^{n+1} - \partial_t e_p^n\right) dxdt \\
        &= -\alpha \int_0^T \int_{\Omega} \left(\nabla \cdot \left(\partial_t \boldsymbol{u}_{\mathrm{NN}}^n - \partial_t \boldsymbol{u}^n\right)\right)\left(\partial_t e_p^{n+1}\right) dxdt + \beta_{FS} \int_0^T \int_{\Omega} \left(\partial_t p_{\mathrm{NN}}^n - \partial_t p^n\right)\left(\partial_t e_p^{n+1}\right) dxdt \\
        &\quad - \alpha \int_0^T \int_{\Omega} \left(\nabla \left(\partial_t p_{\mathrm{NN}}^{n+1} - \partial_t p^{n+1}\right)\right)\left(\partial_t e_{\boldsymbol{u}}^n\right) dxdt. 
        \end{split}\label{equ: thm 2 total}
    \end{align}
    Using the following algebraic identities to the LHS of \cref{equ: thm 2 total},
    \begin{equation}
        \left\{
            \begin{aligned}
            (\xi, \xi - \eta) &= \frac{1}{2}\left( (\xi, \xi) - (\eta, \eta) + (\xi - \eta, \xi - \eta) \right), \\
            (\xi, \eta) &= \frac{1}{4}\left( (\xi + \eta, \xi + \eta) - (\xi - \eta, \xi - \eta) \right).
            \end{aligned} \label{equ: algebraic identities}
        \right.
    \end{equation}
    we obtain
    \begin{align}
        \begin{split}
        \text{LHS} &\geq \left( \frac{1}{M} + \frac{\beta_{FS}}{2} \right) \int_0^T \int_{\Omega} \left(\partial_t e_p^{n+1}\right)^2 dxdt - \frac{\beta_{FS}}{2} \int_0^T \int_{\Omega} \left(\partial_t e_p^n\right)^2 dxdt \\
        &\quad+ \frac{\beta_{FS}}{2} \int_0^T \int_{\Omega} \left(\partial_t e_p^{n+1} - \partial_t e_p^n\right)^2 dxdt - \frac{\lambda}{4} \int_0^T \int_{\Omega} \left(\nabla \cdot \left(\partial_t e_{\boldsymbol{u}}^{n+1} - \partial_t e_{\boldsymbol{u}}^n\right)\right)^2 dxdt \\
        &\quad - \frac{\mu}{2} \int_0^T \int_{\Omega} \varepsilon\left(\partial_t e_{\boldsymbol{u}}^{n+1} - \partial_t e_{\boldsymbol{u}}^n\right) : \varepsilon\left(\partial_t e_{\boldsymbol{u}}^{n+1} - \partial_t e_{\boldsymbol{u}}^n\right) dxdt. 
        \end{split}\label{equ: thm 2 total LHS} 
    \end{align}
    Based on \cref{equ: thm 2 total LHS} and \cref{equ: thm 2 ineq 2}, we obtain
    \begin{align}
        \begin{split}
        \text{LHS} &\geq \left( \frac{1}{M} + \frac{\beta_{FS}}{2} \right) \int_0^T \int_{\Omega} \left(\partial_t e_p^{n+1}\right)^2 dxdt - \frac{\beta_{FS}}{2} \int_0^T \int_{\Omega} \left(\partial_t e_p^n\right)^2 dxdt \\
        &\quad + \left( \frac{\beta_{FS}}{2} - \frac{(1 + C \epsilon_1)}{4} \frac{\alpha^2}{\frac{2\mu}{d} + \lambda} \right) \int_0^T \int_{\Omega} \left(\partial_t e_p^{n+1} - \partial_t e_p^n\right)^2 dxdt - \frac{(1+C \epsilon_1)}{2}\frac{\alpha^2}{\epsilon_1}\delta^2. 
        \end{split}\label{equ: thm 2 total LHS simp}
    \end{align}
    Using the Cauchy--Schwarz inequality and the Poincaré inequality, for any $\epsilon_2 > 0$, the RHS of \cref{equ: thm 2 total} satisfies
    \begin{align}
        \begin{split}
        \text{RHS} &\leq \frac{\alpha^2}{4\epsilon_2} \int_0^T \int_{\Omega} \left(\nabla \cdot \left(\partial_t \boldsymbol{u}_{\mathrm{NN}}^n - \partial_t \boldsymbol{u}^n\right)\right)^2 dxdt + \epsilon_2 \int_0^T \int_{\Omega} \left(\partial_t e_p^{n+1}\right)^2 dxdt \\
        &\quad + \frac{\beta_{FS}}{4\epsilon_2} \int_0^T \int_{\Omega} \left(\partial_t p_{\mathrm{NN}}^n - \partial_t p^n\right)^2 dxdt + \beta_{FS}\epsilon_2 \int_0^T \int_{\Omega} \left(\partial_t e_p^{n+1}\right)^2 dxdt \\
        &\quad + \frac{\alpha^2}{4\epsilon_2} \int_0^T \int_{\Omega} \left(\nabla \left(\partial_t p_{\mathrm{NN}}^{n+1} - \partial_t p^{n+1}\right)\right)^2 dxdt + C_p \epsilon_2 \int_0^T \int_{\Omega} \left(\nabla \cdot \partial_t e_{\boldsymbol{u}}^n\right)^2 dxdt. 
        \end{split}\label{equ: thm 2 total RHS}
    \end{align}
    Applying \Cref{equ: ass} and \cref{equ: thm 2 ineq 1}, we obtain
    \begin{align}
        \begin{split}
        \text{RHS} &\leq \left(1+\beta_{FS}\right)\epsilon_2 \int_0^T \int_{\Omega} \left(\partial_t e_p^{n+1}\right)^2 dxdt + C_p \epsilon_2 \int_0^T \int_{\Omega} \left(\nabla \cdot \partial_t e_{\boldsymbol{u}}^n\right)^2 dxdt + \frac{2\alpha^2 + \beta_{FS}}{4\epsilon_2} \delta^2 \\
        &\leq \left(1+\beta_{FS}\right)\epsilon_2 \int_0^T \int_{\Omega} \left(\partial_t e_p^{n+1}\right)^2 dxdt \\
        &\quad+ \frac{C_p \epsilon_2}{\lambda} \left(1+C\epsilon_1\right) \left[ \frac{\alpha^2}{\frac{2\mu}{d} + \lambda} \int_0^T \int_{\Omega} \left(\partial_t e_p^n\right)^2 dxdt + \frac{\alpha^2}{2\epsilon_1} \delta^2 \right] + \frac{2\alpha^2 + \beta_{FS}}{4\epsilon_2} \delta^2. 
        \end{split}\label{equ: thm 2 total RHS simp}
    \end{align}
    Setting $\beta_{FS} \geq \frac{\left(1+C\epsilon_1\right)\alpha^2}{2\left(\frac{2\mu}{d}+\lambda\right)}$ and combining \cref{equ: thm 2 total LHS simp} and \cref{equ: thm 2 total RHS simp}, we have
    \begin{align}
        \begin{split}
        &\left( \frac{1}{M} + \frac{\beta_{FS}}{2} - \left(1+\beta_{FS}\right)\epsilon_2 \right) \int_0^T \int_{\Omega} \left(\partial_t e_p^{n+1}\right)^2 dxdt \\
        &\leq \left( \frac{\beta_{FS}}{2} + C_1(\epsilon_1, \epsilon_2) \right) \int_0^T \int_{\Omega} \left(\partial_t e_p^n\right)^2 dxdt + C_2(\epsilon_1, \epsilon_2) \delta^2, 
        \end{split}\label{equ: thm 2 result}
    \end{align}
    where
    \begin{equation}
        \left\{
        \begin{aligned}
            C_1(\epsilon_1, \epsilon_2) &= \frac{C_p\epsilon_2(1+C\epsilon_1)\alpha^2}{\lambda\left(\frac{2\mu}{d}+\lambda\right)}, \\
            C_2(\epsilon_1, \epsilon_2) &= \frac{2\alpha^2 + \beta_{FS}}{4\epsilon_2} + \left( \frac{C\epsilon_1 + 1}{2} \right)\left(\frac{C_p\epsilon_2}{\lambda}+1\right) \frac{\alpha^2}{\epsilon_1}. \nonumber
        \end{aligned}
        \right.
    \end{equation}

    The contraction coefficient $\tilde{L}$ is defined as follows:
    \begin{equation}
        \tilde{L} = \frac{\frac{\beta_{FS}}{2} + C_1(\epsilon_1, \epsilon_2)}{\frac{1}{M} + \frac{\beta_{FS}}{2} - \left(1+\beta_{FS}\right)\epsilon_2}. \nonumber
    \end{equation}
    The parameters $\epsilon_1$ and $\epsilon_2$ are chosen sufficiently small to guarantee that $\tilde{L}<1$. Then, \cref{equ: thm 2 result} can be rewritten as
    \begin{equation}
        \int_0^T \int_{\Omega} \left(\partial_t e_p^{n+1}\right)^2 dxdt \leq \tilde{L} \int_0^T \int_{\Omega} \left(\partial_t e_p^n\right)^2 dxdt + C\delta^2. \label{equ: thm 2 result simp}
    \end{equation}
    Furthermore, \cref{equ: thm 2 result simp} yields the limit supremum
    \begin{align}
        &\limsup_{n \to \infty} \int_0^T \int_{\Omega} \left(\partial_t e_p^{n+1}\right)^2 dxdt \nonumber\\
        &\leq \limsup_{n \to \infty} \left[ \tilde{L}^{n+1} \int_0^T \int_{\Omega} \left(\partial_t e_p^0\right)^2 dxdt + \left(\tilde{L}^n + \cdots + 1\right)C\delta^2 \right] = C \frac{\delta^2}{1-\tilde{L}}. \nonumber
    \end{align}
    This completes the proof.
\end{proof}

\section{Implementation details} 
\label{app: para}
This section outlines the detailed information used in the numerical experiments.

In this paper, for brain edema simulations, the spatial collocation and test points coincide with the mesh nodes utilized by the FS-FEM, with temporal snapshots being uniformly sampled. 
For the other experiments, the collocation points are uniformly selected across the space-time domain, while the test points are uniformly distributed in the spatial domain at a specific time instant.

Regarding the data-driven module, the observed data are generated by the exact solutions in Test Cases 1 and 2, while they are obtained by the FS-FEM in Barry--Mercer problem and the brain edema simulation.
Conversely, no observed data are used in the Mandel problem.

In all experiments, we set: 
\begin{itemize}
    \item The parameter $\beta_{FS} = \alpha^2/(2\left(2\mu/d+\lambda\right))$.
    \item The maximum number of fixed-stress splitting iterations $L=10$.
    \item The relative tolerance for HFS-TransNet $\text{Tol}=10^{-8}$.
\end{itemize}

The main neural network parameters used in the experiments are presented below:
\begin{itemize}
    \item $\boldsymbol{x}_c$, $R$: the center and radius of the hidden-layer neuron distribution region $B_R(\boldsymbol{x}_c)$.
    \item $\gamma$, N            : the shape parameter and the number of hidden-layer neurons.
    \item $\text{D}_{\text{init}}$, $\text{D}_{\text{val}}$: the sampling domains for initialization and validation within the data-driven module.
    \item $\text{D}_{\text{train}}$: the sampling domain for iteration within the physics-informed module.
    \item $\text{D}_{\text{test}}$: the sampling domain for accuracy testing.
    \item $\text{P}_{\text{init}}$, $\text{P}_{\text{val}}$: the number of collocation points used for initialization and validation within the data-driven module.
    \item $\text{P}_{\text{in}}$, $\text{P}_{\text{bd}}$, $\text{P}_{\text{ic}}$: the number of interior, boundary, and initial collocation points for iteration within the physics-informed module.
    \item $\text{P}_{\text{test}}$: the number of collocation points for accuracy testing.
\end{itemize}
The network parameters for Test Cases 1 and 2 are provided in \cref{tab: net_para_case1_case2}, while the parameters for the Barry--Mercer, Mandel, and brain edema problems are given in \cref{tab: hfs_para}.
In the brain edema simulations, the superscripts $n$ and $i$ refer to the normal and injured regions, while the subscript ``co'' identifies the interface or shared collocation points between $\Omega^n$ and $\Omega^i$.
\begin{table}[htbp]
    \footnotesize
    \caption{Network parameters for TransNet, FS-TransNet, and HFS-TransNet in Test Case 1 and 2.}
    \label{tab: net_para_case1_case2}
    \begin{center}
    \begin{tabular}{l|l l|l l l}
    \hline
    \multirow{2}*{Symbol} & \multicolumn{2}{c|}{Test Case 1}& \multicolumn{3}{c}{Test Case 2} \\
    \cline{2-3} \cline{4-6}
    & FS-TransNet & HFS-TransNet & TransNet & FS-TransNet & HFS-TransNet\\
    \hline
    $\boldsymbol{x}_c$  & [0, 0, 0]         & [0, 0, 0] 
    & [0.5, 0.5, 0.5]   & [0.5, 0.5, 0.5]   & [0.5, 0.5, 0.5] \\
    $R$                 & 1.75              & 1.75 
    & 0.9               & 0.9               & 0.9 \\
    $\gamma$            & 1                 & 1 
    & 2                 & 2                 & 2 \\
    N                   & 640               & 640 
    & 1280              & 1280              & 1280 \\
    $\text{D}_{\text{init}}$   & -     & $[0,1]^2 \times [0, 0.2]$ 
    &-                         & -     & $[0,1]^2 \times [0, 0.075]$ \\ 
    $\text{D}_{\text{val}}$    & -     & $[0,1]^2 \times (0.2, 0.4]$ 
    &-                         & -     & $[0,1]^2 \times (0.075, 0.15]$\\ 
    $\text{D}_{\text{train}}$  & $[0,1]^2 \times [0, 1]$ & $[0,1]^2 \times [0, 1]$ 
    &$[0,1]^2 \times [0, 1]$   & $[0,1]^2 \times [0, 1]$ & $[0,1]^2 \times [0, 1]$ \\ 
    $\text{D}_{\text{test}}$   & $[0,1]^2 \times \{0.5\}$ & $[0,1]^2 \times \{0.5\}$ 
    &$[0,1]^2 \times \{0.2\}$  & $[0,1]^2 \times \{0.2\}$ & $[0,1]^2 \times \{0.2\}$ \\ 
    $\text{P}_{\text{init}}$   & -     & 2205 
    &-                         & -     & 6724 \\ 
    $\text{P}_{\text{val}}$    & -     & 8405 
    &-                         & -     & 26244 \\ 
    $\text{P}_{\text{in}}$     & 9261  & 9261 
    &68921                     & 68921 & 68921 \\ 
    $\text{P}_{\text{bd}}$     & 6724  & 6724 
    &16900                     & 16900 & 16900 \\ 
    $\text{P}_{\text{ic}}$     & 1681  & 1681 
    &4225                      & 4225  & 4225 \\ 
    $\text{P}_{\text{test}}$   & 1681  & 1681 
    &6561                      & 6561  & 6561 \\
    \hline
    \end{tabular}
    \end{center}
\end{table}
\begin{table}[htbp]
    \footnotesize
    \caption{Network parameters for HFS-TransNet in the Barry--Mercer problem, Mandel problem, and brain edema simulations.}
    \label{tab: hfs_para}
    \begin{center}
    \begin{tabular}{l|l|l|l l}
    \hline
    \multirow{2}*{Symbol} & \multirow{2}*{Barry--Mercer} & \multirow{2}*{Mandel} &\multicolumn{2}{c}{brain edema simulations} \\
    \cline{4-5}
    & & & Normal region & Injured region\\
    \hline
    $\boldsymbol{x}_c$  & [0.5, 0.5, 0.8] & [0.5, 0.5, 0.5] & [80, 50, 0.75] & [90, 50, 0.75] \\
    $R$                 & 1.1             & 0.75            & 10             & 10             \\
    $\gamma$            & 6               & 10              & 1              & 1              \\
    N                   & 1280            & 1280            & 2048           & 2048           \\
    $\text{D}_{\text{init}}$  & $[0,1]^2 \times [0.08, 0.4]$ & - & $\Omega^n \times [0, 1]$   & $\Omega^i \times [0, 1]$ \\ 
    $\text{D}_{\text{val}}$   & $[0,1]^2 \times (0.4, 0.8]$  & - & $\Omega^n \times (1, 1.1]$ & $\Omega^i \times (1, 1.1]$ \\ 
    $\text{D}_{\text{co}}$    & - & - &$(\Omega^n \cap \Omega^i) \times [0, 1.5]$ & $(\Omega^n \cap \Omega^i) \times [0, 1.5]$ \\
    $\text{D}_{\text{train}}$ & $[0,1]^2 \times [0, 1.6]$    & $[0,1]^2 \times [0.001, 1]$ & $\Omega^n \times [0, 1.5]$ & $\Omega^i \times [0, 1.5]$ \\ 
    $\text{D}_{\text{test}}$  & $[0,1]^2 \times \{\pi/2\}$   & $[0,1]^2 \times \{1\}$ & $(\Omega^n \cup \Omega^i) \times \{1.5\}$ & $(\Omega^n \cup \Omega^i) \times \{1.5\}$ \\ 
    $\text{P}_{\text{init}}$  & 2565 & -     & 470500 & 20100 \\ 
    $\text{P}_{\text{val}}$   & 2565 & -     & 47050  & 2010 \\ 
    $\text{P}_{\text{co}}$    & -    & -     & 2000   & 2000  \\
    $\text{P}_{\text{in}}$    & 9261 & 68921 & 470500 & 20100  \\ 
    $\text{P}_{\text{bd}}$    & 6724 & 26244 & 52900  & 2900  \\ 
    $\text{P}_{\text{ic}}$    & 1681 & 6561  & 4705   & 201 \\ 
    $\text{P}_{\text{test}}$  & 1681 & 16641 & 4705   & 201 \\
    \hline
    \end{tabular}
    \end{center}
\end{table}

The network methods (FS-PINN, TransNet, FS-TransNet, and HFS-TransNet) are implemented using PyTorch with an NVIDIA GeForce RTX 4090 GPU, while the FS-FEM method is implemented using FreeFem++ with an Intel(R) Core(TM) i5-9300H CPU @ 2.40GHz.

\section*{Acknowledgments}
L. Zhu's work was partially supported by National Natural Science Foundation of China 
under grant number 11871091, 12571580.

\section*{Data availability}
All the data and source codes to reproduce the results will be made available on GitHub at  \url{https://github.com/ZhequanShen/HFS-TransNet}.

\bibliographystyle{siamplain}
\bibliography{references}

@article{biot1941general,
  title={General theory of three-dimensional consolidation},
  author={Biot, Maurice A},
  journal={J. Appl. Phys.},
  volume={12},
  number={2},
  pages={155--164},
  year={1941},
  publisher={American Institute of Physics},
  doi = {10.1063/1.1712886}
}

@article{vzenivsek1984existence,
  title={The existence and uniqueness theorem in {Biot's} consolidation theory},
  author={{\v{Z}}en{\'\i}{\v{s}}ek, Alexander},
  journal={Appl. Math.},
  volume={29},
  number={3},
  pages={194--211},
  year={1984},
  publisher={Institute of Mathematics, Academy of Sciences of the Czech Republic},
  doi = {10.21136/AM.1984.104085}
}

@article{showalter2000diffusion,
  title={Diffusion in poro-elastic media},
  author={Showalter, Ralph E},
  journal={J. Math. Anal. Appl.},
  volume={251},
  number={1},
  pages={310--340},
  year={2000},
  publisher={Elsevier},
  doi = {10.1006/jmaa.2000.7048}
}

@article{huang2022generalized,
  title={Generalized effective {Biot} theory and seismic wave propagation in anisotropic, poroviscoelastic media},
  author={Huang, Xingguo and Greenhalgh, Stewart and Han, Li and Liu, Xu},
  journal={J. Geophys. Res. Solid Earth},
  volume={127},
  number={3},
  pages={e2021JB023590},
  year={2022},
  publisher={Wiley Online Library},
  doi = {10.1029/2021JB023590}
}

@article{ju2020parameter,
  title={Parameter-robust multiphysics algorithms for {Biot} model with application in brain edema simulation},
  author={Ju, Guoliang and Cai, Mingchao and Li, Jingzhi and Tian, Jing},
  journal={Math. Comput. Simulation},
  volume={177},
  pages={385--403},
  year={2020},
  publisher={Elsevier},
  doi = {10.1016/j.matcom.2020.04.027}
}

@article{wang2025mixed,
  title={A mixed virtual element method for the four-field poroelasticity problem on polygonal meshes and its simulation in brain edema},
  author={Wang, Gang and Cai, Mingchao and Dong, Xiaojing and He, Yinnian},
  journal={Numer. Methods Partial Differential Equations},
  volume={41},
  number={6},
  pages={e70052},
  year={2025},
  publisher={Wiley Online Library},
  doi = {10.1002/num.70052}
}

@article{chen2023physical,
  title={Physical information neural networks for 2D and 3D nonlinear {Biot} model and simulation on the pressure of brain},
  author={Chen, Hao and Ge, Zhihao},
  journal={J. Comput. Phys.},
  volume={490},
  pages={112309},
  year={2023},
  publisher={Elsevier},
  doi = {10.1016/j.jcp.2023.112309}
}

@article{kim20111591,
  title = {Stability and convergence of sequential methods for coupled flow and geomechanics: Fixed-stress and fixed-strain splits},
  journal = {Comput. Methods Appl. Mech. Engrg.},
  volume = {200},
  number = {13},
  pages = {1591-1606},
  year = {2011},
  author = {J. Kim and H.A. Tchelepi and R. Juanes},
  doi = {10.1016/j.cma.2010.12.022}
}

@article{mikelic2013convergence,
  title={Convergence of iterative coupling for coupled flow and geomechanics},
  author={Mikeli{\'c}, Andro and Wheeler, Mary F},
  journal={Comput. Geosci.},
  volume={17},
  number={3},
  pages={455--461},
  year={2013},
  publisher={Springer},
  doi = {10.1007/s10596-012-9318-y}
}

@article{cai2023combination,
  title={A combination of physics-informed neural networks with the fixed-stress splitting iteration for solving {Biot's} model},
  author={Cai, Mingchao and Gu, Huipeng and Hong, Pengxiang and Li, Jingzhi},
  journal={Front. Appl. Math. Stat.},
  volume={9},
  pages={1206500},
  year={2023},
  publisher={Frontiers Media SA},
  doi = {10.3389/fams.2023.1206500}
}

@article{phillips2007coupling,
  title={A coupling of mixed and continuous Galerkin finite element methods for poroelasticity I: the continuous in time case},
  author={Phillips, Phillip Joseph and Wheeler, Mary F},
  journal={Comput. Geosci.},
  volume={11},
  number={2},
  pages={131--144},
  year={2007},
  publisher={Springer},
  doi = {10.1007/s10596-007-9045-y}
}

@article{yi2017study,
  title={A study of two modes of locking in poroelasticity},
  author={Yi, Son-Young},
  journal={SIAM J. Numer. Anal.},
  volume={55},
  number={4},
  pages={1915--1936},
  year={2017},
  publisher={SIAM},
  doi = {10.1137/16M1056109}
}

@article{lee2017parameter,
  title={Parameter-robust discretization and preconditioning of {Biot's} consolidation model},
  author={Lee, Jeonghun J and Mardal, Kent-Andre and Winther, Ragnar},
  journal={SIAM J. Sci. Comput.},
  volume={39},
  number={1},
  pages={A1--A24},
  year={2017},
  publisher={SIAM},
  doi = {10.1137/15M1029473}
}

@article{mora2026118447,
  title = {A virtual element method for the {Biot-Brinkman} equations using Nitsche’s technique},
  journal = {Comput. Methods Appl. Mech. Engrg.},
  volume = {448},
  pages = {118447},
  year = {2026},
  author = {David Mora and Jesus Vellojin and Nitesh Verma},
  doi = {10.1016/j.cma.2025.118447}
}

@article{feng2018analysis,
  title={Analysis of a multiphysics finite element method for a poroelasticity model},
  author={Feng, Xiaobing and Ge, Zhihao and Li, Yukun},
  journal={IMA J. Numer. Anal.},
  volume={38},
  number={1},
  pages={330--359},
  year={2018},
  publisher={Oxford University Press},
  doi = {10.1093/imanum/drx003}
}

@article{oyarzua2016locking,
  title={Locking-free finite element methods for poroelasticity},
  author={Oyarz{\'u}a, Ricardo and Ruiz-Baier, Ricardo},
  journal={SIAM J. Numer. Anal.},
  volume={54},
  number={5},
  pages={2951--2973},
  year={2016},
  publisher={SIAM},
  doi = {10.1137/15M1050082}
}

@article{kim2011stability,
  title={Stability, accuracy, and efficiency of sequential methods for coupled flow and geomechanics},
  author={Kim, Jihoon and Tchelepi, Hamdi A and Juanes, Ruben},
  journal={SPE J.},
  volume={16},
  number={02},
  pages={249--262},
  year={2011},
  publisher={OnePetro},
  doi = {10.2118/119084-PA}
}

@article{storvik2019optimization,
  title={On the optimization of the fixed-stress splitting for {Biot's} equations},
  author={Storvik, Erlend and Both, Jakub W and Kumar, Kundan and Nordbotten, Jan M and Radu, Florin A},
  journal={Internat. J. Numer. Methods Engrg.},
  volume={120},
  number={2},
  pages={179--194},
  year={2019},
  publisher={Wiley Online Library},
  doi = {10.1002/nme.6130}
}

@article{chen2025due,
  title={{DUE}: A deep learning framework and library for modeling unknown equations},
  author={Chen, Junfeng and Wu, Kailiang and Xiu, Dongbin},
  journal={SIAM Rev.},
  volume={67},
  number={4},
  pages={873--902},
  year={2025},
  publisher={SIAM},
  doi = {10.1137/24M1671827}
}

@article{raissi2019physics,
  title={Physics-informed neural networks: A deep learning framework for solving forward and inverse problems involving nonlinear partial differential equations},
  author={Raissi, Maziar and Perdikaris, Paris and Karniadakis, George E},
  journal={J. Comput. Phys.},
  volume={378},
  pages={686--707},
  year={2019},
  publisher={Elsevier},
  doi = {10.1016/j.jcp.2018.10.045}
}

@article{haghighat2022physics,
  title={Physics-informed neural network simulation of multiphase poroelasticity using stress-split sequential training},
  author={Haghighat, Ehsan and Amini, Danial and Juanes, Ruben},
  journal={Comput. Methods Appl. Mech. Engrg.},
  volume={397},
  pages={115141},
  year={2022},
  publisher={Elsevier},
  doi = {10.1016/j.cma.2022.115141}
}

@article{liverani2025hyco,
  title={{HYCO}: Hybrid-cooperative learning for data-driven pde modeling},
  author={Liverani, Lorenzo and Steynberg, Matthys and Zuazua, Enrique},
  journal={arXiv preprint arXiv:2509.14123},
  year={2025},
  doi = {10.48550/arXiv.2509.14123}
}

@article{Quarteroni2025com,
  author = {Quarteroni, Alfio and Gervasio, Paola and Regazzoni, Francesco},
  title = {Combining physics-based and data-driven models: advancing the frontiers of research with scientific machine learning},
  journal = {Math. Models Methods Appl. Sci.},
  volume = {35},
  number = {04},
  pages = {905-1071},
  year = {2025},
  doi = {10.1142/S0218202525500125}
}

@article{cai2020deep,
  title={Deep least-squares methods: An unsupervised learning-based numerical method for solving elliptic PDEs},
  author={Cai, Zhiqiang and Chen, Jingshuang and Liu, Min and Liu, Xinyu},
  journal={J. Comput. Phys.},
  volume={420},
  pages={109707},
  year={2020},
  publisher={Elsevier},
  doi = {10.1016/j.jcp.2020.109707}
}

@article{dong2021local,
  title={Local extreme learning machines and domain decomposition for solving linear and nonlinear partial differential equations},
  author={Dong, Suchuan and Li, Zongwei},
  journal={Comput. Methods Appl. Mech. Engrg.},
  volume={387},
  pages={114129},
  year={2021},
  publisher={Elsevier},
  doi = {10.1016/j.cma.2021.114129}
}

@article{dong2022computing,
  title={On computing the hyperparameter of extreme learning machines: Algorithm and application to computational PDEs, and comparison with classical and high-order finite elements},
  author={Dong, Suchuan and Yang, Jielin},
  journal={J. Comput. Phys.},
  volume={463},
  pages={111290},
  year={2022},
  publisher={Elsevier},
  doi = {10.1016/j.jcp.2022.111290}
}

@article{chen2022bridging,
  title={Bridging traditional and machine learning-based algorithms for solving PDEs: the random feature method},
  author={Chen, Jingrun and Chi, Xurong and E, Weinan and Yang, Zhouwang},
  journal={J. Mach. Learn.},
  volume={1},
  pages={268--98},
  year={2022},
  doi = {10.4208/jml.220726}
}

@article{li2025local,
  title={Local randomized neural networks with finite difference methods for interface problems},
  author={Li, Yunlong and Wang, Fei},
  journal={J. Comput. Phys.},
  volume={529},
  pages={113847},
  year={2025},
  publisher={Elsevier},
  doi = {10.1016/j.jcp.2025.113847}
}

@article{shang2025overlapping,
  title={Overlapping Schwarz preconditioners for randomized neural networks with domain decomposition},
  author={Shang, Yong and Heinlein, Alexander and Mishra, Siddhartha and Wang, Fei},
  journal={Comput. Methods Appl. Mech. Engrg.},
  volume={442},
  pages={118011},
  year={2025},
  publisher={Elsevier},
  doi = {10.1016/j.cma.2025.118011}
}

@article{weng2026deep,
  title={{Deep Collocation Method}: A Framework for Solving PDEs Using Neural Networks with Error Control},
  author={Weng, Mingxing and Mao, Zhiping and Shen, Jie},
  journal={SIAM J. Sci. Comput.},
  volume={48},
  number={1},
  pages={C77--C102},
  year={2026},
  publisher={SIAM},
  doi = {10.1137/25M1739753}
}

@article{zhang2024transferable,
  title={Transferable Neural Networks for Partial Differential Equations},
  author={Zhang, Zezhong and Bao, Feng and Ju, Lili and Zhang, Guannan},
  journal={J. Sci. Comput.},
  volume={99},
  number={1},
  pages={2},
  year={2024},
  publisher={Springer},
  doi = {10.1007/s10915-024-02463-y}
}

@article{lu2025multiple,
  title={A multiple transferable neural network method with domain decomposition for elliptic interface problems},
  author={Lu, Tianzheng and Ju, Lili and Zhu, Liyong},
  journal={J. Comput. Phys.},
  pages={113902},
  year={2025},
  publisher={Elsevier},
  doi = {10.1016/j.jcp.2025.113902}
}

@article{yan2026time,
  title={Time-stepping transferable neural networks for heat conduction with moving heat sources in additive manufacturing},
  author={Yan, Linhuan and Hu, Boya and Guan, Xiaofei and Yang, Zihao and Ma, Liang},
  journal={Adv. Eng. Inform.},
  volume={73},
  pages={104601},
  year={2026},
  publisher={Elsevier},
  doi = {10.1016/j.aei.2026.104601}
}

@article{xu2026weak,
  title={{Weak TransNet}: A Petrov-Galerkin Based Neural Network Method for Solving Elliptic PDEs},
  author={Xu, Zhihang and Wang, Min and Wang, Zhu},
  journal={J. Sci. Comput.},
  volume={107},
  number={2},
  pages={60},
  year={2026},
  publisher={Springer},
  doi = {10.1007/s10915-026-03273-0}
}

@article{shen2026matched,
  title={Matched Asymptotic Expansions-Based Transferable Neural Networks for Singular Perturbation Problems},
  author={Shen, Zhequan and Ju, Lili and Zhu, Liyong},
  journal={Internat. J. Numer. Methods Engrg.},
  volume={127},
  number={6},
  pages={e70307},
  year={2026},
  publisher={Wiley Online Library},
  doi = {10.1002/nme.70307}
}

@article{cheng2026generalized,
  title={Generalized Transferable Neural Networks for Steady-State Partial Differential Equations},
  author={Cheng, Tao and Ju, Lili and Qiao, Zhonghua and Zhang, Xiaoping},
  journal={arXiv preprint arXiv:2604.03020},
  year={2026},
  doi = {10.48550/arXiv.2604.03020}
}

@article{hornik1991approximation,
  title={Approximation capabilities of multilayer feedforward networks},
  author={Hornik, Kurt},
  journal={Neural Netw.},
  volume={4},
  number={2},
  pages={251--257},
  year={1991},
  publisher={Elsevier},
  doi = {10.1016/0893-6080(91)90009-T}
}

@article{barry1999exact,
  title={Exact solutions for two-dimensional time-dependent flow and deformation within a poroelastic medium},
  author={Barry, SI and Mercer, GN},
  journal={J. Appl. Mech.},
  volume={66},
  number={2},
  pages={536--540},
  year={1999},
  publisher={The American Society of Mechanical Engineers},
  doi = {10.1115/1.2791080}
}

@article{phillips2009overcoming,
  title={Overcoming the problem of locking in linear elasticity and poroelasticity: an heuristic approach},
  author={Phillips, Phillip Joseph and Wheeler, Mary F},
  journal={Comput. Geosci.},
  volume={13},
  number={1},
  pages={5--12},
  year={2009},
  publisher={Springer},
  doi = {10.1007/s10596-008-9114-x}
}

@article{mandel1953consolidation,
  title={Consolidation des sols ({\'e}tude math{\'e}matique)},
  author={Mandel, Jean},
  journal={Geotechnique},
  volume={3},
  number={7},
  pages={287--299},
  year={1953},
  publisher={Thomas Telford Ltd},
  doi = {10.1680/geot.1953.3.7.287}
}

@article{abousleiman1996mandel,
  title={Mandel's problem revisited},
  author={Abousleiman, Y and Cheng, AH-D and Cui, L and Detournay, E and Roegiers, J-C},
  journal={Geotechnique},
  volume={46},
  number={2},
  pages={187--195},
  year={1996},
  publisher={Thomas Telford Ltd},
  doi = {10.1680/geot.1996.46.2.187}
}

@article{cai2023iteratively,
  title={An Iteratively Decoupled Algorithm for Multiple-Network Poroelastic Model with Applications in Brain Flow Simulations},
  author={Cai, Mingchao and Lei, Meng and Li, Jingzhi and Sun, Jiaao and Wang, Feng},
  journal={arXiv preprint arXiv:2310.15457},
  year={2023},
  doi = {10.48550/arXiv.2310.15457}
}

@article{chu2026block,
    author = {Chu, Hanyu and Pavarino, Luca Franco and Zampini, Stefano},
    title = {{Block BDDC/FETI-DP} Preconditioners for Three-Field Mixed Finite Element Discretizations of {Biot’s} Consolidation Model},
    journal = {SIAM J. Sci. Comput.},
    volume = {48},
    number = {3},
    pages = {A1710-A1733},
    year = {2026},
    doi = {10.1137/25M1748652}
}
\end{document}